\newtheorem{theorem}{Theorem}
\newtheorem{lemma}[theorem]{Lemma}
\newtheorem*{conjecture}{Conjecture}
\theoremstyle{definition}
\newtheorem{definition}[theorem]{Definition}
\newtheorem{remark}[theorem]{Remark}
\numberwithin{equation}{section}
\numberwithin{theorem}{section}
\newcommand{\p}{\partial}
\newcommand{\bbC}{\mathbb C}
\newcommand{\RR}{\mathbb R}
\newcommand{\ZZ}{\mathbb Z}
\newcommand{\TT}{\mathbb{T}}
\renewcommand{\Re}{\operatorname{Re}}
\renewcommand{\Im}{\operatorname{Im}}
\newcommand{\CornerP}{\mathcal C}
\newcommand{\ssQ}{\mathcal{Q}}  
\newcommand{\Eb}{\mathbf{E}}
\newcommand{\eb}{\mathbf{e}}
\newcommand{\cc}[1]{\overline{#1}}
\DeclareMathOperator{\Hess}{Hess}
\DeclareMathOperator{\tr}{tr}
\DeclareMathOperator{\rk}{rk}
\DeclareMathOperator{\In}{In}
\DeclareMathOperator{\diag}{diag}
\DeclareMathOperator{\Ran}{Ran}
\DeclareMathOperator{\Null}{Null}
\DeclareMathOperator{\col}{col}
\title[A local test for global extrema in the dispersion
relation]{A local test for global extrema in the dispersion
 relation of a periodic graph}
\author{G. Berkolaiko}
\address{Department of
  Mathematics, Texas A\&M University, College Station, TX 77843-3368, USA}
\email{berko@math.tamu.edu}
\author{Y. Canzani}
\address{Department of Mathematics, University of North Carolina at Chapel Hill,
Phillips Hall, Chapel Hill, NC  27599, USA}
\email{canzani@email.unc.edu}
\author{G. Cox}
\address{Department of Mathematics and Statistics, Memorial University of Newfoundland, St. John's, NL A1C 5S7, Canada}
\email{gcox@mun.ca}
\author{J.L. Marzuola}
\address{Department of Mathematics, University of North Carolina at Chapel Hill,
Phillips Hall, Chapel Hill, NC  27599, USA}
\email{marzuola@math.unc.edu}
\begin{document}

\begin{abstract}
  We consider a family of periodic tight-binding models (combinatorial
  graphs) that have the minimal number of links between copies of the
  fundamental domain.  For this family we establish a local condition
  of second derivative type under which the critical points of the
  dispersion relation can be recognized as global maxima or minima.
  Under the additional assumption of time-reversal symmetry, we show
  that any local extremum of a dispersion band is in fact a global
  extremum if the dimension of the periodicity group is three or less, or
  (in any dimension) if the critical point in question is a symmetry
  point of the Floquet--Bloch family with respect to complex
  conjugation.  We demonstrate that our results are nearly optimal
  with a number of examples.
\end{abstract}

\maketitle

\section{Introduction}

Wave propagation through periodic media is usually studied using the
Floquet--Bloch transform (\cite{AshcroftMermin_solid,Kuc_bams16}),
which reduces a periodic eigenvalue problem over an infinite domain to
a parametric family of eigenvalue problems over a compact domain.  In
the tight-binding approximation often used in physical applications,
the wave dynamics is described mathematically in terms of a periodic
self-adjoint operator $H$ acting on $\ell^2(\Gamma)$, where $\Gamma$
is a $\mathbb{Z}^d$-periodic graph (see examples in
Figure~\ref{fig:more_lattices}) and $d$ is the dimension of the
underlying space.  The Floquet--Bloch transform introduces $d$
parameters $\alpha=(\alpha_1,\ldots,\alpha_d)$, called
\emph{quasimomenta}, which take their values in the torus
$\TT^d := \RR^d / (2\pi\ZZ)^d$, called the \emph{Brillouin zone}.  The
transformed operator $T(\alpha)$ is an $N\times N$ Hermitian matrix
function that depends smoothly on $\alpha$; here $N$ is the number of
vertices in a fundamental domain for $\Gamma$.  The graph of the
eigenvalues of $T(\alpha)$, when thought of as a multi-valued function
of $\alpha$, is called the \emph{dispersion relation}.  Indexing the
eigenvalues in increasing order, we refer to the graph of the
$n$-th eigenvalue, $\lambda_n(\cdot)$, as the \emph{$n$-th branch} of
the dispersion relation.  The range of $\lambda_n(\cdot)$ is called
the \emph{$n$-th spectral band}. The union of the spectral bands is
the spectrum of the periodic operator $H$ on $\ell^2(\Gamma)$, the set
of wave energies at which waves can propagate through the medium.
The band edges mark the boundary\footnote{Assuming the bands do not
  overlap; if the edges for each band are found, this can be easily
  verified.}  between propagation and insulation, and are thus of
central importance to understanding physical properties of the
periodic material, see
\cite{AshcroftMermin_solid,ozawa2019topological,Kol_etal20} and
references therein.
  
Naturally, the upper (or lower) edge of the $n$-th band is the maximum (or minimum) value of $\lambda_n(\cdot)$.  Since searching
for the location of the band edges over the whole torus $\TT^d$ can be
computationally intensive, the usual approach is to check several
points of symmetry and lines between them.  However, as shown in
\cite{HarKucSob_jpa07}, extrema of the dispersion relation in $d>1$ do
not have to occur at the symmetry points.
Remarkably, in the present work we show that this problem can be
overcome on graphs that have ``one crossing edge per generator,'' a
property which we now define. A notable example of a graph with this
property is the graph found in \cite{HarKucSob_jpa07} and shown here
in Figure~\ref{fig:periodic_example}.

\begin{figure}
  \centering
  \includegraphics{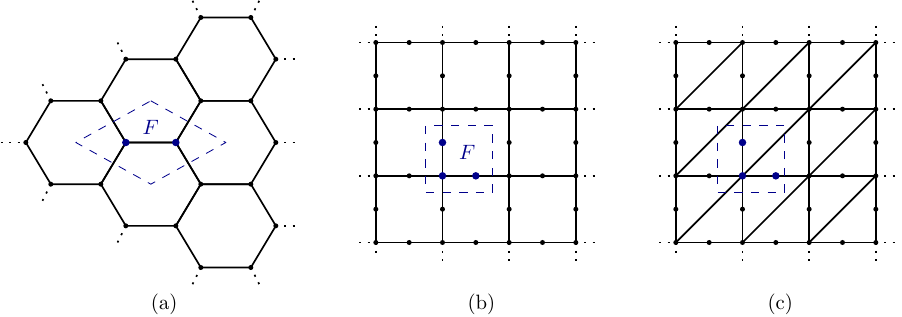}
  \caption{The honeycomb lattice (a) and the Lieb lattice (b) satisfy
    Definition~\ref{def:crossing_edge}, while the augmented Lieb
    lattice (c) does not.  In all figures, the vertices within the
    dashed line show a possible choice of the fundamental domain $F$.}
  \label{fig:more_lattices}
\end{figure}

\begin{definition}
  \label{def:crossing_edge}
  Let  $\Gamma = (V,\sim)$ be a $\mathbb{Z}^d$-periodic graph (see Definition \ref{def:periodic_graph}),  
  where $V$ denotes the set of vertices and $\sim$ denotes the adjacency relation.
  $\Gamma$ is said to have \emph{one crossing edge per
    generator} if it is connected and there exists a choice of a fundamental domain
  $F$ such that there are exactly $2d$ adjacent pairs $u\sim w$ with
  $u \in F$ and $w \in V\setminus F$.
\end{definition}
By a fundamental domain $F$ we mean a subset of $V$ containing exactly
one representative from each orbit generated by the group action of
$\mathbb{Z}^d$.  The choice of a fundamental domain is clearly
non-unique.  In terms of the operator $H$, the edges (adjacency)
denote the interacting pairs of vertices, see \eqref{eq:Hdef} for
details.  We are thus talking about the models known in physics as
``nearest neighbor tight binding''; we stress, however, that our
periodic graphs have arbitrary structure modulo the assumption of
Definition~\ref{def:crossing_edge}.

To give some examples, the ``one crossing edge per generator" 
assumption is satisfied by the $\mathbb{Z}^d$ lattice,
the honeycomb lattice shown in Figure~\ref{fig:more_lattices}(a), and the Lieb
lattice in Figure~\ref{fig:more_lattices}(b). The graph shown in 
Figure~\ref{fig:more_lattices}(c)  does not satisfy
Definition~\ref{def:crossing_edge}.  For further insight into Definition \ref{def:crossing_edge}  see the discussion around equation \eqref{eq:nn_fund}, and  see Figure \ref{fig:periodic_example} for another example.

In this work we prove that for graphs with one crossing edge per
generator, there is a simple \emph{local} criterion\,---\,a variation of the
second derivative test\,---\,that detects \emph{if a given critical
  point of $\lambda_n(\cdot)$ is a global extremum}.  In many cases 
  we can conclude that \emph{any local extremum of a band of the
  dispersion relation is in fact a global extremum}.  This does not
imply uniqueness of, say, a local minimum, but it does mean that every
local minimum attains the same value; see, for example,
Figure~\ref{fig:KucMag}(left).  In a sense, the dispersion relation
behaves as if it were a convex function (even though this can never be
the case for a continuous function on a torus).  As a consequence,
even if no local extrema are found among the points of symmetry, it
would be enough to run a gradient search-like method.

We now formally state our results. For each $1\leq n \leq N$, we are interested in the extrema of the continuous function
$$\alpha \mapsto \lambda(\alpha) := \lambda_n\big(T(\alpha)\big).$$
Assuming the eigenvalue is simple\footnote{If the eigenvalue is
  multiple, then two or more branches touch.  This situation
  is important in applications; there are fast algorithms to find such
  points \cite{DiePug_siamjmaa09,DiePapPug_siamjmaa13,BerPar_siamjmaa21}
  which lie outside the scope of this work.}  at a point
$\alpha^\circ$, $\lambda(\alpha)$ is a real analytic function of
$\alpha$ in a neighborhood of $\alpha^\circ$, by \cite[Section II.6.4]{K76}.

To look for the critical points of $\lambda(\alpha)$ and to test their
\emph{local} character, one can use the following formulas (see
Section~\ref{sec:var flas}) for the first two derivatives of a simple
eigenvalue $\lambda(\alpha)$:
\begin{equation}
  \label{eq:derivatives_lambda}
  \nabla \lambda(\alpha^\circ)
  = B^* f^\circ,
  \qquad \qquad
  \Hess \lambda(\alpha^\circ) = 2 \Re W,
\end{equation}
where
\begin{equation}
  \label{eq:W_def}
  W := \Omega - B^* \big(T(\alpha^\circ) - \lambda(\alpha^\circ)\big)^+ B,
\end{equation}
$f^\circ$ is the normalized eigenvector corresponding to the
eigenvalue $\lambda(\alpha^\circ)$ of $T(\alpha^\circ)$, $B$ and
$\Omega$ are correspondingly the
$N\times d$ matrix of first derivatives and
$d\times d$ matrix of second derivatives of $T(\alpha)$ at
$\alpha=\alpha^\circ$ evaluated on $f^\circ$,
\begin{align}
  \label{eqn:BandOmega_def} 
  B  := D \big(T(\alpha) f^\circ \big)\Big|_{\alpha=\alpha^\circ},
  \qquad
  \Omega :=
  \frac12 \Hess \left\langle f^\circ, T(\alpha) f^\circ \right\rangle
  \Big|_{\alpha = \alpha^\circ},
\end{align}
and $\big(T(\alpha^\circ) - \lambda(\alpha^\circ)\big)^+$ denotes the
Moore--Penrose pseudoinverse of
$T(\alpha^\circ) - \lambda(\alpha^\circ)$.

The textbook second derivative test tells us that a point
$\alpha^\circ$ with $B^* f^\circ=0$ and $\Re W > 0$ is a local
minimum.  It turns out that a lot more information can be gleaned from
the matrix $W$ itself, which may be complex.

\begin{theorem}
  \label{thm:mainI}
  Let $\Gamma$ be a $\mathbb{Z}^d$-periodic graph with one crossing 
  edge per generator, and let $H$ be a periodic self-adjoint operator 
  acting on $\ell^2(\Gamma)$.
  Suppose that the $n$-th branch,
  $\lambda(\alpha) = \lambda_n\big(T(\alpha)\big)$, of the Floquet--Bloch
  transformed operator $T(\alpha)$ has a critical point at
  $\alpha^\circ \in \TT^d$.
  Suppose that $\lambda(\alpha^\circ)$ is a simple eigenvalue of $T(\alpha^\circ)$
  and that the corresponding eigenvector $f^\circ$ is non-zero on at least
  one end of any crossing edge.  Let $W$ be the matrix defined in
  equation~\eqref{eq:W_def}.

  \begin{enumerate}
  \item \label{item:Wpositive}
    If $W \geq 0$,  then $\lambda(\alpha)$ achieves
    its global minimal value at  $\alpha=\alpha^\circ$.
  \item If $W \leq 0$, then $\lambda(\alpha)$ achieves
    its global maximal value at
    $\alpha=\alpha^\circ$.
  \end{enumerate}
\end{theorem}

We conjecture that $W \geq 0$ is also a \emph{necessary} condition for
the global minimum, and analogously for the global maximum.  In
Section~\ref{example:Kuc} we present an example that has a local
minimum that is not a global minimum; in this case $\Re W > 0$ while
$W$ is sign-indefinite.

If we additionally assume that the periodic operator $H$ is real
symmetric (has ``time-reversal symmetry'' in physics terminology),
there are certain points in the Brillouin zone that are critical for
every $\lambda$.  These are the points $\alpha^*\in \TT^d$ such that
$\cc{T(\alpha)} = T(\alpha^*-\alpha)$ for all $\alpha \in \TT^d$.  We
denote the set of these points by $\CornerP$ and refer to them
informally as ``corner points''; for the square parameterization
$(-\pi,\pi]^d$ of the Brillouin zone used throughout the paper, we
have $\CornerP = \{0, \pi\}^d$.

\begin{theorem}
  \label{thm:mainII}
Suppose, in addition to the hypotheses of Theorem \ref{thm:mainI}, that $H$ is real, and $\alpha^\circ\in \TT^d$ is a local extremum of $\lambda(\alpha)$. Then, in each of the following circumstances,
  $\lambda(\alpha^\circ)$ is the {global} extremal value:
  \begin{enumerate}
  \item If $\alpha^\circ \in \CornerP$.
  \item If $d \leq 2$.
  \item If $d = 3$ and the extremum is non-degenerate.
  \end{enumerate}
\end{theorem}

We therefore envision the following application of
Theorems~\ref{thm:mainI} and \ref{thm:mainII}.  In the setting of
Theorem~\ref{thm:mainI}, a gradient descent search for a local minimum
of $\lambda(\alpha)$ is to be followed by a computation of $W$,
using equation~\eqref{eq:W_def}.  If $W$ is non-negative,
Theorem~\ref{thm:mainI} guarantees that the global minimum has been found.
If $W$ is sign-indefinite, our conjecture requires the search to
continue.  In the setting of Theorem \ref{thm:mainII} one should first
check if any of the corner points $\CornerP$ is a local minimum,
possibly followed by the general gradient descent search.  But in any
of the cases specified in the theorem, \emph{the search can stop at
  the first local minimum found}, without having to compute the matrix $W$.

We now comment on the assumptions of our theorems.  One crossing edge
per generator is a substantial but common assumption: even for
$\mathbb{Z}^1$-periodic graphs with real symmetric $H$, the well-known
Hill Theorem fails in the presence of multiple crossing
edges, see \cite{ExnKucWin_jpa10}.  The restriction on the dimension
in Theorem~\ref{thm:mainII} is also essential: in dimension $d=4$ and
higher an internal point may be a local but not a global
extremum.  In Section~\ref{sec:examples} we provide such an example.  (Since
Theorem~\ref{thm:mainI} is valid for any $d$, it follows that the
corresponding $W$ is sign indefinite.)

\subsection*{Ideas of the proof and outline of the paper}
The assumptions of Theorems~\ref{thm:mainI} and \ref{thm:mainII} allow eigenvectors to
vanish on one side of a crossing edge. This situation is frequently
encountered in examples, as will be seen in Section~\ref{sec:eig}, but the proofs are significantly more 
complicated, since the matrix $\Omega$ in \eqref{eqn:BandOmega_def} is degenerate in that case. Here we 
give an overview of the paper and illustrate the proof of Theorem~\ref{thm:mainI}(1) when $\Omega$ is invertible. This greatly simplifies the statements and proofs of many of our results; see Remark~\ref{rem:invertible} for further discussion.


In Section~\ref{sec:basic defs} we introduce notation and clarify our 
assumptions on the structure of $\Gamma$. Next, in Section~\ref{sec:var
  flas}, we derive the first and second variation formulas \eqref{eq:derivatives_lambda}. A crucial observation is that the operator $W$ in \eqref{eq:W_def}, whose real part is the Hessian of $\lambda$, has the structure of a generalized Schur complement\,---\,``generalized'' because of the need to use the pseudoinverse in \eqref{eq:W_def}.

In Sections~\ref{sec:decomp} and \ref{sec:weyl}, we decompose the
operator $T(\alpha)$ as $T(\alpha) = S + R(\alpha) + \lambda(\alpha^\circ)$, where $S$ has a zero eigenvalue 
and does not depend on $\alpha$, and $R(\alpha)$ is a rank-$d$
perturbation\footnote{$R(\alpha)$ corresponds to $\sum R_j(\alpha_j)$
  in equation \eqref{eq:TS_decomposition}.} with the same
signature as $\Omega$. The rank is a consequence
of the ``one crossing edge per generator'' assumption. 
This decomposition allows us to establish a global Weyl-type bound
for the eigenvalues of $T(\alpha)$ in terms of eigenvalues of $S$, 
Lemma~\ref{lem:combinedWeyl}. If we further assume that $\Omega$ 
is positive, this simplifies to
\begin{equation}
\label{simpleWeyl}
	\lambda_n\big(T(\alpha)\big) \geq \lambda(\alpha^\circ) + \lambda_n(S)
\end{equation}
for all $\alpha \in \TT^d$.

Next, in Section~\ref{sec:index thy} we use a generalized
Haynsworth formula (see Appendix \ref{sec:gen_Haynsworth}) to relate 
the indices of $S$, $T(\alpha)$, $\Omega$ and the generalized Schur
complement $W$. Again assuming $\Omega$ is positive, 
the relationship simplifies to
\begin{equation*}
  i_-(W) = i_-(S) - i_-\big(T(\alpha^\circ) - \lambda(\alpha^\circ) \big),
\end{equation*}
where $i_-$ denotes the number of negative eigenvalues, i.e. the Morse
index. This can be expressed in words as ``the Morse index of $W$
equals the spectral shift between $S$ and the positive perturbation
$S+ R(\alpha^\circ) = T(\alpha^\circ) - \lambda(\alpha^\circ)$.''  This idea is further
developed for general self-adjoint operators in \cite{BerKuc_prep20},
where it is called the ``Lateral Variation Principle.''

To complete the proof of Theorem~\ref{thm:mainI}, in Section~\ref{sec: proof of main thm} we observe that
$W \geq 0$ implies
$i_-(W)=0$, and hence $\lambda(\alpha^\circ)$ saturates the lower
global Weyl bound in \eqref{simpleWeyl}. More precisely, we have
\[
	i_-(S) = i_-\big(T(\alpha^\circ) - \lambda(\alpha^\circ) \big) = n-1,
\]
where the second equality holds because $\lambda(\alpha^\circ)$ is the $n$-th eigenvalue of $T(\alpha^\circ)$. Since it was already observed that $0$ is an eigenvalue of $S$, this means $\lambda_{n}(S) = 0$. Substituting this into \eqref{simpleWeyl} gives $\lambda_n\big(T(\alpha)\big) \geq \lambda(\alpha^\circ)$ for all $\alpha$, as was to be shown. For a general non-degenerate (not necessarily positive) $\Omega$ the formulas are more complicated due to the presence of $i_-(\Omega)$, but the idea of the proof is identical. On the other hand, when $\Omega$ is degenerate we need to project away from its null space, and the proof is more involved.

In Section~\ref{sec:proof_mainII} we give the proof of Theorem~\ref{thm:mainII}.
The additional assumption of real symmetric $H$ implies $\Re W = W$ if
$\alpha^\circ \in \CornerP$, and so $W$ is completely determined by $\Hess \lambda(\alpha^\circ) = 2 \Re W$.  On the other hand, if $\alpha^\circ \not\in \CornerP$, then
$W$ may be complex. In this case we show that $\det W = 0$; this
allows us to estimate the spectrum of $W$ from the spectrum of
$\Re W$, but only in low dimensions.

Finally, in Section \ref{sec:examples} the main results are illustrated
with examples such as the honeycomb and Lieb lattices. We give examples where some components of the eigenvectors vanish, and conjecture that, under the hypotheses of Theorem~\ref{thm:mainI}, $W \geq 0$ is also a necessary condition for $\alpha^\circ$ to be a global minimum, and similarly for a maximum. We also provide (counter)examples showing that
when our assumptions are violated the theorems no longer hold. Specifically, we show that both Theorems~\ref{thm:mainI} and~\ref{thm:mainII} can fail if there are multiple crossing edges per generator, and Theorem~\ref{thm:mainII} no longer holds when $d>3$.

\section*{Acknowledgements} 
The authors gratefully acknowledge the American Institute of
Mathematics SQuaRE program from which this work developed.
G.B. acknowledges the support of NSF Grant DMS-1815075.
Y.C. acknowledges the support of NSF Grant DMS-1900519 and the Alfred
P. Sloan Foundation. G.C. acknowledges the support of NSERC grant
RGPIN-2017-04259. JLM acknowledges the generous support of NSF Grants
DMS-1352353 and DMS-1909035 as well as MSRI for hosting him as part of
this work was completed.  We are grateful to Peter Kuchment and Mikael Rechtsman
for many useful conversations about Bloch band structure, and John
Maddocks for kindly recalling to us the history of his generalized
Haynsworth formula.  We thank Lior Alon, Ram Band, Ilya Kachkovsky,
Stephen Shipman and Frank Sottile for interesting discussions and
insightful suggestions, and also thank the referees for their careful reading 
and helpful comments on the manuscript.


\section{Basic definitions and local behavior of $\lambda(\alpha)$}
\label{sec:prelim}

In this section we introduce a  matrix representation for the Floquet--Bloch
transformed operator $T(\alpha)$ (Section \ref{sec:basic defs}),
present a version of the Hellmann--Feynman variational formulas for
the $n$-th eigenvalue branch  $\lambda_n\big(T(\alpha)\big)$ (Section \ref{sec:var flas}), and give a decomposition formula for $T(\alpha)$ that works under the ``one crossing edge per generator'' assumption (Section \ref{sec:decomp}).

\subsection{Basic definitions} \label{sec:basic defs}

In this section we introduce a matrix representation for the Floquet--Bloch transformed operator. 
To do this we first present the notation we shall use for the vertices 
of the graph and the generators of the group action.

\begin{figure}
  \centering
  \includegraphics[scale=0.8]{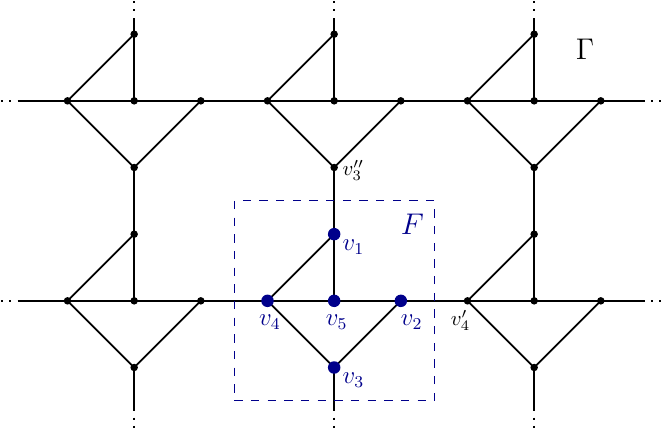}
  \caption{An example of a $\mathbb{Z}^2$-periodic graph $\Gamma$ and
    its fundamental domain $F$.  If $g_1$ and $g_2$ are the horizontal
    and vertical shifts generating the $\mathbb{Z}^2$ symmetry, then
    $v_4' =g_1v_4$ and $v_3'' = g_2v_3$.  The edges with
    end-vertices $(v_2, v'_4)$ and $(v_1, v_3'')$ give rise to the
    crossing edges, which are $(v_2, v_4)$ and $(v_1,v_3)$.
   }
  \label{fig:periodic_example}
\end{figure}

\begin{definition}
  \label{def:periodic_graph}
  A \emph{$\mathbb{Z}^d$-periodic graph} $\Gamma = (V,\sim)$ is a locally
  finite graph with a faithful cofinite group action by the free
  abelian group $G=\mathbb{Z}^d$.
\end{definition}

\noindent In this definition, $V$ is the set of vertices of the graph, and $\sim$ denotes the
adjacency relation between vertices. It will be notationally convenient
to postulate that $v\sim v$ for any $v\in V$.   Each
vertex is adjacent to finitely many other vertices (``locally
finite'').  Any $g\in G$ defines a bijection $v\mapsto gv$ on $V$
which preserves adjacency: $gu \sim gv$ if and only if $u\sim v$
(``action on the graph'').  For any $g_1,g_2 \in G$ we have
$g_1(g_2v) = (g_1g_2)v$ (``group action''). Also, $0\in G$ is the only
element that acts on $V$ as the identity (``faithful'').  The orbit
of $v$ is the subset $\{gv\colon g\in G\} \subset V$ and we assume
that there are only finitely many distinct orbits in $V$
(``cofinite'').

The ``one crossing edge per generator" assumption, introduced in
Definition \ref{def:crossing_edge}, is our central assumption on the
graph $\Gamma$.  In addition to the examples of
Figure~\ref{fig:more_lattices}, the graph of \cite{HarKucSob_jpa07} in
Figure~\ref{fig:periodic_example} also satisfies the assumption.  One
can think of such graphs as having been obtained by decorating $\ZZ^d$
by ``pendant'' or ``spider'' decorations
\cite{SchAiz_lmp00,DoKucOng_exner17}.  The terminology ``one crossing
edge per generator'' comes from the following consideration.
Definition \ref{def:crossing_edge} implies the existence of a choice of
$d$ generators $\{g_j\}_{j=1}^d$ of $G$ such that the fundamental
domain is connected only to its nearest neighbors with respect to the
generator set.  Namely,
\begin{equation} \label{eq:nn_fund}
  u\sim gv,\ \;\;u,v \in F
  \quad \Longrightarrow \quad
  g\in \{\mathrm{id}\}\cup\{g_j\}\cup\{g_j^{-1}\}.
\end{equation}
Conversely (because the graph is connected), for any generator $g_j$
in $\{g_j\}_{j=1}^d$, there is a unique pair of vertices
$u_j,v_j \in F$ such that $u_j \sim g_j v_j$.  The pair $(u_j,v_j)$
will be referred to as the \emph{$j$-th crossing edge. }
We note that while the
vertices $u_j$ and $v_j$ may not be adjacent in $\Gamma$, they will
become adjacent after the Floquet--Bloch transform, which we describe
next.  We also note that $u_j$ and $v_j$ may not be distinct.

Let $H$ be a periodic self-adjoint operator on $\ell^2(\Gamma)$. In
the present setting\footnote{Self-adjointness of more general graphs
  with Hermitian $H$ was studied in
  \cite{CdVTorTru_afstm11,Mil_ieot11}.} 
\begin{equation}
  \label{eq:Hdef}
  (H f)_u = \sum_{v\sim u} H_{u,v} f_v,
  \qquad H_{u,v} \in \mathbb{C}, \qquad H_{v,u} = \overline{H_{u,v}},
\end{equation}
and
\begin{equation}
  \label{eq:Hperiodic}
  H_{gu,gv} = H_{u,v}
  \quad\mbox{for any }u,v \in V, \ g\in G.
\end{equation}
We also assume that if $u,v$ are adjacent distinct vertices, then
$H_{u,v} \neq 0$.  Together with \eqref{eq:Hdef}, this means that there
is a non-zero interaction between vertices if and only if there is an
edge between them.
%

For a graph with one crossing edge per generator, the transformed
operator $T$ is a parameter dependent self-adjoint operator $T(\alpha)
\colon \ell^2(F) \to \ell^2(F)$, $\alpha \in \mathbb{T}^d$, acting as
\begin{equation}
  \label{eq:FB_def}
  (T(\alpha) f)_u = \sum_{\substack{g\in G,\,v\in F\\ gv \sim u}}
  H_{u,gv} \chi_\alpha(g) f_{v},
\end{equation}
where $F$ is a fundamental domain and
\begin{equation}
  \label{eq:character_def}
  \chi_\alpha(g) =
  \begin{cases}
    1 & \mbox{if } g = \mathrm{id}, \\
    e^{\pm i\alpha_j} & \mbox{if } g=g_j^{\pm1}.
  \end{cases}
\end{equation}
The function $\chi_\alpha$ is the character of a
representation of $G$; we do not need to list its values on the rest
of $G$ because of condition~\eqref{eq:nn_fund}. Continuing to denote by $N$ the number of vertices in a fundamental domain, this means that $T(\alpha)$ may be thought of as an $N\times N$ matrix. For a more general
definition of the Floquet--Bloch transform on graphs we refer the reader
to \cite[Chap.~4]{BerKuc_graphs}.

\begin{figure}
  \centering
  \includegraphics{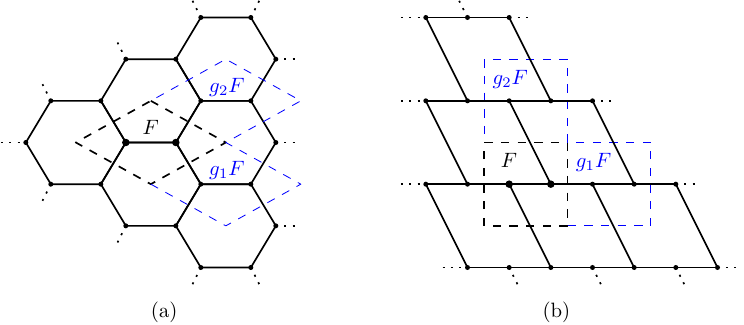}
  \caption{(a) ``Honeycomb'' embedding and (b) ``square'' embedding of
    the same graph into $\RR^2$.  The definition of the Floquet--Bloch
    transform in the physics literature usually takes the geometry of the
    embedding into account, but the resulting $T(\alpha)$ only differs
    by applying a linear transformation to the variables $\alpha$.}
  \label{fig:more_graphene}
\end{figure}

\begin{remark}
  \label{rem:brillouin_basis}
  It is important to note that we view a periodic graph as a topological
  object, with an abstract action by an abelian group.  In physical
  applications there is usually a natural geometric embedding of the
  graph into $\mathbb{R}^d$ and a geometric representation of the
  periodicity group (``lattice'').  The lattice, in turn, determines a
  particular parameterization of the Brillouin zone $\TT^d$ via the
  ``dual lattice.''  This physical parameterization may differ from
  the ``square lattice''
  parameterization~\eqref{eq:FB_def}--\eqref{eq:character_def} by a
  linear change in variables $\alpha$, as illustrated in
  Figure~\ref{fig:more_graphene}.  Our results do
  not depend on the choice of variables\,---\,in particular, the test matrix
  $W$ can be computed using any parameterization, see Lemma \ref{lem:invariance} below.
\end{remark}

\subsection{Variational formulas for $\lambda(\alpha)$}\label{sec:var flas}


Let $T(\alpha)$ be a real analytic family of $N \times N$ Hermitian
matrices, parametrized by $\alpha \in \TT^d$.  Fix a point
$\alpha^\circ \in \TT^d$ and suppose the $n$-th eigenvalue
$ \lambda_n\big(T(\alpha^\circ)\big)$ is simple, with
eigenvector $f^\circ$. For $\alpha$ in a neighborhood of $\alpha^\circ$, $ \lambda_n\big(T(\alpha)\big)$ is simple, and the function $\alpha \mapsto \lambda_n\big(T(\alpha)\big)$ is real analytic; see
\cite[Section II.6.4]{K76}. To streamline notation, we will denote this
function by $\lambda(\alpha)$.  We are interested in computing the
gradient and Hessian of $\lambda(\alpha)$ at $\alpha = \alpha^\circ$.

Let us introduce some notation and conventions. For a smooth enough
scalar function $u(\alpha)$ on $\TT^d$, its gradient, $\nabla u$, is a
column vector of length $d$, its differential, $D u$, is a row vector
of length $d$, and its Hessian, $\Hess u$, is a $d \times d$ symmetric
matrix.  For vector-valued functions we define $D$ componentwise: if
$f \colon \TT^d \to \mathbb R^N$, then $Df$ is an $N \times d$
matrix-valued function.  According to this convention, the matrix $B$
introduced in \eqref{eqn:BandOmega_def} is the $N \times d$ matrix
\begin{equation}
  \label{cdef} 
  B  := D \big(T(\alpha) f^\circ \big)\Big|_{\alpha^\circ}  = \left[
    \begin{array}{ccc}
      \vdots
      & \vdots
      & \vdots \\
      \frac{\p}{\p \alpha_1} (T(\alpha) f^\circ )\Big|_{\alpha^\circ}
      & \hdots
      & \frac{\p}{\p\alpha_d} (T(\alpha) f^\circ )\Big|_{\alpha^\circ} \\ 
      \vdots & \vdots & \vdots 
    \end{array}
  \right],
\end{equation}
where each $\frac{\p}{\p \alpha_j} (T(\alpha) f^\circ
)|_{\alpha^\circ}$ is a column vector of size $N$.  We stress that
$f^\circ$ remains fixed when the derivatives are taken with respect to $\alpha$.
We denote by $B^*$ the adjoint of $B$.  

We will regularly use the {\it Moore--Penrose pseudo-inverse} of a
matrix $A$, denoted $A^+$.  If $A$ is Hermitian, it can be computed as
\begin{equation}
  \label{eq:pinv_def}
  A^+  h = \sum_{\lambda_k(A) \neq 0} \frac{1}{\lambda_k(A)}
  \langle h, f_k \rangle f_k,
\end{equation}
where $\{ f_k \}$ is an orthonormal eigenbasis of $A$, with
corresponding eigenvalues $\{\lambda_k\}$.  With these terms defined,
we now state a multi-parameter version
  of the well
known Hellmann--Feynman eigenvalue variation formulas.

\begin{lemma}
  \label{lem:HF}
Let $T(\alpha)$ be an analytic family of $N \times N$ Hermitian
matrices, parametrized over $\alpha \in \TT^d$.
  Let $\lambda(\alpha^\circ)$ be a simple eigenvalue of $T(\alpha^\circ)$, and let
  $f^\circ$ be the corresponding normalized eigenvector.
For $B$ and $W$ defined in~\eqref{eq:W_def} and~\eqref{eqn:BandOmega_def}, respectively, we have
  \begin{equation}
    \label{Dlambda}
    \nabla \lambda(\alpha^\circ)
    = D\left\langle f^\circ, T(\alpha) f^\circ \right\rangle
    \Big|_{\alpha=\alpha^\circ} = B^* f^\circ,
  \end{equation}
  and
  \begin{equation}\label{Hesslambda}
    \Hess \lambda(\alpha^\circ) = 2 \Re W.
  \end{equation}
\end{lemma}

Since it is already known that $\lambda(\alpha)$ is analytic, the proof simply consists of using the well-known one-parameter version of the Hellmann--Feynman formula to compute directional derivatives. We include the details here for completeness.

\begin{proof}
For fixed $\eta \in \RR^d$ define $\hat\lambda(s) = \lambda(\alpha^\circ + s\eta)$, so that
\[
	\frac{d \hat\lambda}{ds}(0) = \left< \nabla \lambda(\alpha^\circ), \eta \right>.
\]
On the other hand, the one-dimensional Hellmann--Feynman formula (see
\cite[Remark II.2.2 (p. 81)]{K76}) says
\[
  \frac{d \hat\lambda}{ds}(0) = \big\langle f^\circ, T^{(1)} f^\circ \big\rangle,
\]
where
\[
  T^{(1)} f^\circ = \frac{d}{ds} T(\alpha^\circ + s\eta) f^\circ \Big|_{s=0} = B \eta.
\]
It follows that $\left< \nabla \lambda(\alpha^\circ), \eta \right> = \left< B^* f^\circ, \eta\right>$ for all $\eta$, which proves \eqref{Dlambda}.

Computing similarly for the second derivative, again using \cite[Remark II.2.2]{K76}, we find that
\[
	\left<\eta, [\Hess \lambda(\alpha^\circ)] \eta \right> = 2 \left[\big\langle f^\circ, T^{(2)} f^\circ \big\rangle - \big\langle T^{(1)} f^\circ, \big(T(\alpha^\circ)-\lambda(\alpha^\circ)\big)^+ T^{(1)} f^\circ \big\rangle \right],
\]
where
\[
	 \big\langle f^\circ, T^{(2)} f^\circ  \big\rangle = \frac12 \frac{d^2}{ds^2}  \big\langle f^\circ, T(\alpha^\circ + s\eta) f^\circ \big\rangle \Big|_{s=0} = \left<\eta, \Omega \eta \right>.
\]
Substituting $T^{(1)} f^\circ = B \eta$, it follows that
\[
	\left<\eta, [\Hess \lambda(\alpha^\circ)] \eta \right> = 2 \left<\eta, (\Omega - B^*\big(T(\alpha^\circ)-\lambda(\alpha^\circ)\big)^+ B )\eta \right> = 2 \left< \eta, W \eta \right>
\]
for all $\eta \in \RR^d$, and hence the symmetric parts of the matrices $\Hess \lambda(\alpha^\circ)$ and $2W$ coincide:
\[
	\Hess \lambda(\alpha^\circ) + \Hess \lambda(\alpha^\circ)^T = 2 (W + W^T).
\]
Since the Hessian is real and symmetric, and $W$ is Hermitian, this simplifies to $\Hess \lambda(\alpha^\circ) = W + \cc{W} = 2 \Re W$, as claimed.
\end{proof}

We conclude this section by verifying the claim made in Remark
\ref{rem:brillouin_basis}, that the sign of $W$ used in
Theorem~\ref{thm:mainI} can be computed using any parameterization of
the torus.

\begin{lemma}\label{lem:invariance}
  Let $\phi \colon \TT^d \to \TT^d$ be a diffeomorphism, and define
  $\widetilde T(k) = T \big(\phi(k)\big)$.  Let $\alpha=\alpha^\circ$
  be a critical point of a simple eigenvalue
  $\lambda_n\big(T(\alpha)\big)$.  For the matrix $W$ computed from
  $T(\alpha)$ at $\alpha^\circ$ according to \eqref{eq:W_def}, and $\widetilde W$
  similarly computed from $\widetilde T(k)$ at
  $k^\circ := \phi^{-1}(\alpha^\circ)$, we have
  \begin{equation}
    \label{eq:Winvariance}
    \widetilde W = J^T W J,
  \end{equation}
  where $J$ is the real invertible Jacobian matrix $J =
  D\phi(k)\big|_{k=k^\circ}$.
\end{lemma}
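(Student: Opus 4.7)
The plan is to track how each piece of $W$ transforms under the substitution $\alpha=\phi(k)$. The Moore--Penrose pseudoinverse factor is essentially unaffected: because $\phi$ is a diffeomorphism, the eigenvector $f^\circ$ of $T(\alpha^\circ)$ serves equally as an eigenvector of $\widetilde T(k^\circ)=T(\alpha^\circ)$ with the same simple eigenvalue $\widetilde\lambda(k^\circ)=\lambda(\alpha^\circ)$, so $\widetilde T(k^\circ)-\widetilde\lambda(k^\circ) = T(\alpha^\circ)-\lambda(\alpha^\circ)$. It therefore suffices to establish the two identities $\widetilde B = BJ$ and $\widetilde \Omega = J^T \Omega J$; substituting into \eqref{eq:W_def} and using that $J$ is real (so $J^{*}=J^T$) then produces \eqref{eq:Winvariance} directly.

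The identity $\widetilde B = BJ$ is an immediate chain-rule computation applied to the vector-valued map $\alpha\mapsto T(\alpha)f^\circ$, treating $f^\circ$ as a constant vector independent of $\alpha$. The identity for $\widetilde \Omega$ requires slightly more care because the chain rule for Hessians of compositions produces an extra inhomogeneous term. Writing $u(\alpha)=\langle f^\circ, T(\alpha)f^\circ\rangle$ and $\tilde u(k) = u(\phi(k))$, the standard multivariable formula reads
\[
\Hess \tilde u(k^\circ) \;=\; J^T \,\Hess u(\alpha^\circ)\, J \;+\; \sum_{j=1}^d \frac{\p u}{\p\alpha_j}(\alpha^\circ)\,\Hess \phi_j(k^\circ).
\]

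The main (and only nontrivial) step is to observe that the second sum vanishes. This is precisely where the critical-point hypothesis enters: by Theorem~\ref{lem:HF} we have $\nabla u(\alpha^\circ) = \nabla \lambda(\alpha^\circ) = 0$, so each coefficient $\p u/\p\alpha_j(\alpha^\circ)$ is zero. Hence $\Hess \tilde u(k^\circ) = J^T \Hess u(\alpha^\circ) J$, i.e.\ $\widetilde\Omega = J^T \Omega J$. Combining with $\widetilde B = BJ$ yields $\widetilde W = J^T W J$. Were $\alpha^\circ$ not critical, the Hessian-of-$\phi$ term would generally survive and the clean formula \eqref{eq:Winvariance} would fail, so the hypothesis is used essentially.
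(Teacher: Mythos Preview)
Your proof is correct and follows essentially the same route as the paper: both use the chain rule to get $\widetilde B = BJ$, invoke the critical-point hypothesis (via Theorem~\ref{lem:HF}) to kill the extra inhomogeneous term in the Hessian transformation and obtain $\widetilde\Omega = J^T\Omega J$, and then assemble the pieces. Your version is slightly more explicit in writing out the full chain-rule formula for the Hessian and in noting that the pseudoinverse factor is literally unchanged, but the argument is the same.
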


\begin{proof}
  Applying the chain rule to the definition of $\widetilde B$, we get
  \begin{equation*}
    \widetilde B 
    := D\left(\widetilde T(k) f^\circ\right)\Big|_{k = k^\circ}
    = D\big(T(\alpha) f^\circ \big) 
    \Big|_{\alpha=\alpha^\circ} D\phi(k)\big|_{k=k^\circ} 
    = B J.
  \end{equation*}
  In particular, since $\alpha^\circ$ is a critical point,
  $\widetilde B^* f^\circ = J^T B^* f^\circ = 0$, cf.\
  equation~\eqref{Dlambda}.  Therefore $k^\circ$ is a
  critical point of the simple eigenvalue
  $\lambda_n\big(\widetilde T(k)\big)$.  By a similar calculation,
  $\alpha^\circ$ is a critical point of the scalar function
  $\Phi(\alpha) := \left\langle  f^\circ, T(\alpha) f^\circ
  \right\rangle$. The Hessian at a critical point transforms under a
  diffeomorphism as
  \begin{align}\label{eq:diffHess}
    \Hess \Phi(\alpha(k)) \big|_{k = k^\circ}
    = J^T \left( \Hess
    \Phi(\alpha)
    \big|_{\alpha = \alpha^\circ} \right) J,
  \end{align}
  implying $\widetilde \Omega = J^T \Omega J$.
  Putting it all together gives \eqref{eq:Winvariance}.
\end{proof}

  We remark that since $J$ is real, \eqref{eq:Winvariance} implies
  $\Re \widetilde W = J^T (\Re W)J$. This could also have been obtained by applying the transformation rule \eqref{eq:diffHess} to the function $\lambda(\alpha)$, which has Hessian proportional to $\Re W$, according to Lemma~\ref{lem:HF}.
  

\subsection{The decomposition of $T(\alpha)$}
\label{sec:decomp}

Lemma~\ref{lem:HF} is valid for any family $T(\alpha)$ of Hermitian
matrices. We now consider the specialized form of the $T(\alpha)$
appearing as the Floquet--Bloch transform of a graph \emph{with one
  crossing edge per generator}.  For a graph satisfying Definition
\ref{def:crossing_edge}, there exists a choice of fundamental domain
and periodicity generators such that the Floquet--Bloch transformed
operator $T(\alpha)$ is given by equation~\eqref{eq:FB_def} and the
Brillouin zone $\TT^d$ is parameterized by $\alpha \in (-\pi,\pi]^d$.
Other physically relevant parameterizations of $T(\alpha)$ may be
obtained by a change of variables $\alpha$; by Lemma \ref{lem:invariance}, it is enough
to establish our theorems for a single parameterization.

The operator $T(\alpha)$ defined by \eqref{eq:FB_def} can be decomposed as
\begin{equation}
  \label{eq:Talphadecomp}
  T(\alpha) = T_0 + \sum_{j=1}^d T_j(\alpha_j),
\end{equation}
where $T_0$ is a constant Hermitian matrix, and each $T_j$ has at
most two nonzero entries.  More precisely, if $\{g_j\}_{j=1}^d$ are the generators for $G$, $(u_j,v_j)$ is the
$j$-th crossing edge (see Section~\ref{sec:basic defs}) and 
$$h_j := H_{u_j,g_j v_j},$$ then
\begin{equation}
  \label{eq:Tj_def}
  T_j(\alpha_j)
  = h_j e^{i \alpha_j} \Eb_{u_j, v_j} + \cc{h_j} e^{-i \alpha_j} \Eb_{v_j, u_j},
\end{equation}
where $\Eb_{u,v}$ denotes the $N \times N$ matrix with $1$ in the $u$-$v$ entry and all other entries equal to $0$. If $u_j \neq v_j$, then $T_j(\alpha_j)$ will have two nonzero entries, appearing in a $2 \times 2$ submatrix of the form
\[
  \begin{bmatrix}
    0 & h_j e^{i \alpha_j} \\
    \cc{h_j} e^{-i \alpha_j} & 0
  \end{bmatrix} .
\]
If $u_j = v_j$, then
$T_j(\alpha_j)$ has a single nonzero entry,
namely $2 \Re \left(h_j e^{i \alpha_j}\right)$, on the diagonal.

We now give explicit formulas for $B$, $\Omega$, and their combinations
that will be useful later.

\begin{lemma}
  \label{lem:Bexplicit}
  Let $T(\alpha)$ be as in \eqref{eq:Talphadecomp}. Then for $j=1, \dots, d$, the matrix $B$
  defined in \eqref{cdef} has $j$-th column
  \begin{equation}
    \label{eq:Bexplicit}
    \col_j(B) = i \left( h_j e^{i \alpha^\circ_j} f^\circ_{v_j}
      \eb_{u_j} - \cc{h_j} e^{-i \alpha^\circ_j} f^\circ_{u_j} \eb_{v_j} \right),
  \end{equation}
  where $\{\eb_u\}_{u=1}^N$ denotes the standard basis  for $\bbC^N$.
  Consequently, by Lemma~\ref{lem:HF},
  \begin{equation*}
    \frac{\partial \lambda}{\partial \alpha_j} (\alpha^\circ)
    = -2 \Im(h_j e^{i \alpha^\circ_j}
    f^\circ_{v_j} \cc{f^\circ_{u_j}}),
  \end{equation*}
  and $\alpha^\circ$ is a critical point of $\lambda$ if and only if
  \begin{equation}
    \label{cp:real}
    h_j e^{i \alpha^\circ_j} f^\circ_{v_j} \overline{f^\circ_{u_j}} \in \RR
  \end{equation}
  for each $j=1,\ldots,d$.
\end{lemma}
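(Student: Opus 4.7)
The proof will be a direct calculation exploiting the decomposition \eqref{eq:Talphadecomp}. The plan is to observe that only $T_j$ depends on $\alpha_j$, so that $\partial T/\partial \alpha_j = \partial T_j/\partial \alpha_j$, and then differentiate the explicit formula \eqref{eq:Tj_def} term by term. I would compute
\[
\frac{\partial T_j}{\partial \alpha_j} = i h_j e^{i\alpha_j}\Eb_{u_j,v_j} - i \cc{h_j} e^{-i\alpha_j}\Eb_{v_j,u_j},
\]
apply this to $f^\circ$, and use the identity $\Eb_{u,v} f^\circ = f^\circ_v \eb_u$ (which is immediate from the definition of $\Eb_{u,v}$). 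Evaluating at $\alpha=\alpha^\circ$ then produces exactly the expression \eqref{eq:Bexplicit} for $\col_j(B)$.

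Next I would read off $\nabla \lambda(\alpha^\circ) = B^* f^\circ$ from Theorem \ref{lem:HF}. Its $j$-th component is the Hermitian inner product $\langle \col_j(B),\, f^\circ\rangle$, and plugging in \eqref{eq:Bexplicit} gives
\[
\tfrac{\partial \lambda}{\partial \alpha_j}(\alpha^\circ)
= -i\cc{h_j} e^{-i\alpha^\circ_j}\cc{f^\circ_{v_j}} f^\circ_{u_j}
  + i h_j e^{i\alpha^\circ_j} f^\circ_{v_j}\cc{f^\circ_{u_j}}
= -2\Im\bigl(h_j e^{i\alpha^\circ_j} f^\circ_{v_j}\cc{f^\circ_{u_j}}\bigr),
\]
after recognising the difference of a complex number and its conjugate as $2i\Im$. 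Since $\alpha^\circ$ is critical exactly when each $\partial\lambda/\partial\alpha_j$ vanishes, the condition \eqref{cp:real} follows immediately.

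For the final ``in particular'' statement one only needs to note that when $u_j = v_j$, the quantity $h_j e^{i\alpha^\circ_j} f^\circ_{v_j}\cc{f^\circ_{u_j}}$ reduces to $h_j e^{i\alpha^\circ_j} |f^\circ_{u_j}|^2$, which is real if and only if $h_j e^{i\alpha^\circ_j}$ is real, provided $f^\circ_{u_j}\neq 0$. This nonvanishing hypothesis is precisely the ``non-zero on at least one end of any crossing edge'' assumption carried along in Theorems \ref{thm:mainI} and \ref{thm:mainII}.

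There is no real obstacle here: the entire argument is a careful bookkeeping exercise combining the decomposition \eqref{eq:Talphadecomp}--\eqref{eq:Tj_def} with the first identity of Theorem \ref{lem:HF}. The only point where I would slow down is the sign and conjugation convention in passing from $\col_j(B)$ to $(B^* f^\circ)_j$, which must be done carefully enough to produce a clean $-2\Im$ (rather than $+2\Im$ or a $2i\Im$ artefact) in the final formula.
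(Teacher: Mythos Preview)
Your proposal is correct and follows essentially the same route as the paper: differentiate the explicit formula \eqref{eq:Tj_def}, apply to $f^\circ$ to obtain \eqref{eq:Bexplicit}, and then combine with \eqref{Dlambda} to read off $\partial\lambda/\partial\alpha_j$. You are in fact slightly more careful than the paper in noting that the final ``in particular'' conclusion $h_j e^{i\alpha^\circ_j}\in\RR$ (when $u_j=v_j$) requires $f^\circ_{u_j}\neq 0$, a hypothesis the paper's own proof leaves implicit.
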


It was already observed in \cite[Lemma A.2]{BanBerWey_jmp15} that
\eqref{cp:real} holds at a critical point; we include a proof here for convenience since it follows easily from
\eqref{eq:Bexplicit}.

\begin{proof}
Using \eqref{eq:Tj_def} we obtain 
 \[
 	T_j(\alpha_j) f^\circ = h_j e^{i \alpha_j} f^\circ_{v_j} \eb_{u_j} + \cc{h_j} e^{-i \alpha_j} f^\circ_{u_j} \eb_{v_j}
 \]
 for each $j$, and \eqref{eq:Bexplicit} follows.
Then, from \eqref{Dlambda} and \eqref{eq:Bexplicit}, we have
 \[
 	 \frac{\partial \lambda}{\partial \alpha_j} (\alpha^\circ) = \left< \col_j(B), f^\circ \right> =  i \left( h_j e^{i \alpha^\circ_j} f^\circ_{v_j} \overline{f^\circ_{u_j}} - \cc{h_j} e^{-i \alpha^\circ_j} f^\circ_{u_j} \overline{f^\circ_{v_j}} \right)
	 = -2 \Im(h_j e^{i \alpha^\circ_j}
    f^\circ_{v_j} \cc{f^\circ_{u_j}}),
\]
which completes the proof.
\end{proof}

\begin{lemma}
  \label{lem:Oexplicit}
  For $T(\alpha)$ as in \eqref{eq:Talphadecomp}, the matrix $\Omega$
  defined in \eqref{eqn:BandOmega_def} is diagonal, with
  \begin{equation}
    \label{Omegaexplicit}
    \Omega_{jj} = -\Re\left(h_j e^{i \alpha^\circ_j}
      f^\circ_{v_j} \overline{f^\circ_{u_j}}\right)
  \end{equation}
 for each $j=1,\ldots,d$.
\end{lemma}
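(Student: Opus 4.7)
The plan is to use the decomposition formula \eqref{eq:Talphadecomp} to reduce the computation of $\Omega$ to a sum of single-variable calculations. Since $T(\alpha) = T_0 + \sum_j T_j(\alpha_j)$ and $T_0$ is constant in $\alpha$, the scalar function
\[
    \Phi(\alpha) := \langle f^\circ, T(\alpha) f^\circ \rangle
    = \langle f^\circ, T_0 f^\circ\rangle + \sum_{j=1}^d \langle f^\circ, T_j(\alpha_j) f^\circ\rangle
\]
is a sum of functions each depending on a single coordinate $\alpha_j$. Therefore all mixed partials $\partial^2 \Phi/\partial \alpha_i \partial \alpha_j$ with $i\neq j$ vanish identically, which shows that $\Hess \Phi$ is diagonal and hence so is $\Omega = \tfrac12 \Hess \Phi\big|_{\alpha^\circ}$.

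Next, for the diagonal entries I would expand $\langle f^\circ, T_j(\alpha_j) f^\circ\rangle$ using the explicit formula \eqref{eq:Tj_def} for $T_j$. Writing out the two rank-one matrices $\Eb_{u_j,v_j}$ and $\Eb_{v_j,u_j}$ applied to $f^\circ$ and pairing with $f^\circ$, we obtain
\[
    \langle f^\circ, T_j(\alpha_j) f^\circ\rangle
    = h_j e^{i\alpha_j} f^\circ_{v_j}\overline{f^\circ_{u_j}}
      + \overline{h_j} e^{-i\alpha_j} f^\circ_{u_j}\overline{f^\circ_{v_j}}
    = 2\Re\!\left(h_j e^{i\alpha_j} f^\circ_{v_j}\overline{f^\circ_{u_j}}\right).
\]
Differentiating twice in $\alpha_j$ multiplies the complex exponential by $i^2=-1$, so
\[
    \frac{\partial^2}{\partial \alpha_j^2}\langle f^\circ, T_j(\alpha_j) f^\circ\rangle
    = -2\Re\!\left(h_j e^{i\alpha_j} f^\circ_{v_j}\overline{f^\circ_{u_j}}\right).
\]
Evaluating at $\alpha=\alpha^\circ$ and dividing by two yields the claimed formula \eqref{Omegaexplicit}.

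Note that the case $u_j = v_j$ is handled uniformly: in that case $T_j(\alpha_j)$ has the single diagonal entry $2\Re(h_j e^{i\alpha_j})$, and pairing with $f^\circ$ gives $2\Re(h_j e^{i\alpha_j})|f^\circ_{u_j}|^2 = 2\Re(h_j e^{i\alpha_j} f^\circ_{v_j}\overline{f^\circ_{u_j}})$, so the same expression and the same second-derivative calculation apply. There is no real obstacle here; the argument is essentially a bookkeeping exercise built on the key structural fact from \eqref{eq:Talphadecomp} that each $T_j$ depends on exactly one quasimomentum coordinate, which is precisely what the "one crossing edge per generator" assumption buys us.
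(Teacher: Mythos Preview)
Your proof is correct and follows essentially the same approach as the paper: both compute $\langle f^\circ, T_j(\alpha_j) f^\circ\rangle = 2\Re(h_j e^{i\alpha_j} f^\circ_{v_j}\overline{f^\circ_{u_j}})$ from the explicit form \eqref{eq:Tj_def} and read off the second derivative. You are simply more explicit than the paper (which just writes ``the result follows'') in spelling out why the mixed partials vanish and in handling the $u_j=v_j$ case.
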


\begin{proof}
As in the proof of Lemma \ref{lem:Bexplicit}, we compute
\[
	\left< T_j(\alpha_j) f^\circ, f^\circ \right> = h_j e^{i \alpha_j} f^\circ_{v_j} \overline{f^\circ_{u_j}} + \cc{h_j} e^{-i \alpha_j} \overline{f^\circ_{v_j}} f^\circ_{u_j} 
	= 2 \Re \left(h_j e^{i \alpha_j} f^\circ_{v_j} \overline{f^\circ_{u_j}}\right),
\]
and the result follows.
 \end{proof}

If $\alpha^\circ$ is a critical point, \eqref{cp:real} and \eqref{Omegaexplicit} together imply that for each $j=1,\ldots,d$,
\begin{equation}
\label{Omegaexplicit2}
	\Omega_{jj}
	= - h_j e^{i \alpha^\circ_j} f^\circ_{v_j} \overline{f^\circ_{u_j}}
	= - \cc{h_j} e^{- i \alpha^\circ_j} \overline{f^\circ_{v_j}} f^\circ_{u_j}.
\end{equation}

In what follows we let $J'$ denote the indices of non-zero diagonal entries of $\Omega$, and
let $J''$ be its complement, namely
\begin{equation}
  \label{eq:Jprime_def}
  J' := \{j : f^\circ_{u_j} f^\circ_{v_j} \neq 0\},
  \qquad
  J'' := \{j : f^\circ_{u_j} f^\circ_{v_j} = 0\}.
\end{equation}

\begin{lemma}
  \label{lem:BOBexplicit}
  Let $P=P_{\Null(\Omega)}$ be the orthogonal projection onto
  $\Null(\Omega)$.  If $\alpha^\circ$ is a critical point of
  $\lambda(\alpha)$, then
  \begin{align}\label{eq:BOB}
    B \Omega^+ B^*
    &= \sum_{j \in J'} \left(
      \frac{\Omega_{jj}}{|f^\circ_{u_j}|^2} \Eb_{u_j,u_j}
      + h_j e^{i \alpha^\circ_j} \Eb_{u_j,v_j}
      + \cc{h_j} e^{-i \alpha^\circ_j} \Eb_{v_j,u_j}
      +\frac{\Omega_{jj}}{|f^\circ_{v_j}|^2} \Eb_{v_j,v_j}\right),\\
    \label{eq:BPB}
    BPB^* 
    &= \sum_{j \in J''} |h_j|^2 \left( | f^\circ_{v_j}|^2 \Eb_{u_j,u_j}
      + | f^\circ_{u_j}|^2 \Eb_{v_j,v_j} \right).
  \end{align}
  Therefore, $\Ran(BPB^*)$ is spanned by the vectors
  \begin{equation}\label{RanBPB}
    \big\{ \eb_{u_j} : f^\circ_{u_j} = 0, f^\circ_{v_j} \neq 0 \big\}
    \cup \big\{ \eb_{v_j} : f^\circ_{v_j} = 0, f^\circ_{u_j} \neq 0 \big\}.
  \end{equation}
\end{lemma}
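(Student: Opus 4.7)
The plan is to exploit the fact that $\Omega$ is diagonal (by Lemma \ref{lem:Oexplicit}), so $\Omega^+$ and $P$ are also diagonal, with complementary supports $J'$ and $J''$ respectively. Consequently both $B\Omega^+ B^*$ and $BPB^*$ decompose as weighted sums of rank-one matrices $\col_j(B)\,\col_j(B)^*$, and the entire lemma reduces to expanding a single such outer product using the explicit formula \eqref{eq:Bexplicit} for $\col_j(B)$.

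First I would compute $\col_j(B)\,\col_j(B)^*$ directly. Writing $a_j := h_j e^{i\alpha^\circ_j}$ for brevity and expanding the four products arising from \eqref{eq:Bexplicit} yields
\[
  \col_j(B)\,\col_j(B)^* = |a_j|^2|f^\circ_{v_j}|^2 \Eb_{u_j,u_j}
  - a_j^2 f^\circ_{v_j}\overline{f^\circ_{u_j}}\Eb_{u_j,v_j}
  - \cc{a_j}^2 f^\circ_{u_j}\overline{f^\circ_{v_j}}\Eb_{v_j,u_j}
  + |a_j|^2|f^\circ_{u_j}|^2 \Eb_{v_j,v_j}.
\]
For $j \in J'$, this must be divided by $\Omega_{jj}$. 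Here the critical-point identity \eqref{Omegaexplicit2}, i.e.\ $\Omega_{jj} = -a_j f^\circ_{v_j}\overline{f^\circ_{u_j}} = -\cc{a_j}\overline{f^\circ_{v_j}}f^\circ_{u_j}$, does all the work: the two off-diagonal coefficients collapse immediately to $a_j$ and $\cc{a_j}$, while for the diagonals one uses that $\Omega_{jj}$ is real, so $\Omega_{jj}^2 = |a_j|^2 |f^\circ_{u_j}|^2 |f^\circ_{v_j}|^2$, giving $|a_j|^2 |f^\circ_{v_j}|^2/\Omega_{jj} = \Omega_{jj}/|f^\circ_{u_j}|^2$ (and symmetrically for the other diagonal). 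Summing over $j \in J'$ produces \eqref{eq:BOB}.

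For $j \in J''$, at least one of $f^\circ_{u_j}$, $f^\circ_{v_j}$ vanishes, so the two off-diagonal terms in the expansion above are zero. A brief case analysis (both vanish, only $f^\circ_{u_j}$ vanishes, only $f^\circ_{v_j}$ vanishes) shows that in each case
\[
  \col_j(B)\,\col_j(B)^* = |h_j|^2\bigl(|f^\circ_{v_j}|^2 \Eb_{u_j,u_j} + |f^\circ_{u_j}|^2 \Eb_{v_j,v_j}\bigr),
\]
which upon summation over $J''$ yields \eqref{eq:BPB}. The characterization \eqref{RanBPB} of $\Ran(BPB^*)$ is then just a matter of reading off which diagonal basis vectors $\eb_{u_j}$ or $\eb_{v_j}$ appear with a strictly positive coefficient: the contribution at $u_j$ survives iff $f^\circ_{v_j}\neq 0$ and $f^\circ_{u_j}=0$, and symmetrically at $v_j$.

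There is no real obstacle here; this is a bookkeeping lemma. The only point requiring care is the simplification step for $j \in J'$, where one must be careful to use \emph{both} forms of the critical-point identity \eqref{Omegaexplicit2} — one for the $\Eb_{u_j,v_j}$ coefficient and its conjugate for the $\Eb_{v_j,u_j}$ coefficient — to see that the cross terms reduce cleanly to $h_j e^{i\alpha^\circ_j}$ and $\cc{h_j} e^{-i\alpha^\circ_j}$. After that the statement is immediate.
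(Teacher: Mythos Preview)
Your proposal is correct and follows essentially the same approach as the paper's proof: both use the diagonality of $\Omega$ to write $B\Omega^+B^*$ and $BPB^*$ as sums of $\Omega_{jj}^{-1}\col_j(B)\col_j(B)^*$ and $\col_j(B)\col_j(B)^*$ respectively, then expand via \eqref{eq:Bexplicit} and simplify using \eqref{Omegaexplicit2}. You have simply spelled out the intermediate expansion of $\col_j(B)\col_j(B)^*$ more explicitly than the paper does.
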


\begin{remark}
If $u_j = v_j$, the $j$th summand in \eqref{eq:BOB} is identically zero; otherwise it contains a nonzero $2\times2$ submatrix of the form
\[
  \begin{bmatrix}
    \Omega_{jj} |f^\circ_{u_j}|^{-2} & 
    h_j e^{i \alpha^\circ_j} \\[5pt]
    \cc{h_j} e^{-i \alpha^\circ_j} &
    \Omega_{jj} |f^\circ_{v_j}|^{-2}
  \end{bmatrix}.
\]
The off-diagonal part is precisely the matrix $T_j(\alpha^\circ_j)$ appearing in \eqref{eq:Talphadecomp}; this fact is essential to the proof of Lemma \ref{lem:combinedWeyl} below.
\end{remark}

\begin{proof}
The pseudoinverse $\Omega^+$ is diagonal, with
\[
	(\Omega^+)_{jj} = \begin{cases} \Omega_{jj}^{-1} , & j \in J', \\
	0, & j \in J''.
	\end{cases}
\]
It follows that
\[
	B \Omega^+ B^* = \sum_{j \in J'} \Omega_{jj}^{-1} \col_j(B) \col_j(B)^*.
\]
Using \eqref{eq:Bexplicit} for $\col_j(B)$ and~\eqref{Omegaexplicit2} for $\Omega_{jj}$, we obtain \eqref{eq:BOB}.

Similarly, the orthogonal projection $P$ onto $\Null(\Omega)$ is diagonal, with
\[
	P_{jj} = \begin{cases} 0, & j \in J', \\
	1, & j \in J'',
	\end{cases}
\]
and so
\[
	BPB^* = \sum_{j \in J''} \col_j(B) \col_j(B)^*.
\]
Again, using \eqref{eq:Bexplicit} for $\col_j(B)$, \eqref{eq:BPB} follows.

Finally, note that the $j$th summand in \eqref{eq:BPB} contains at most one nonzero term, since either $f^\circ_{u_j} = 0$ or $f^\circ_{v_j} = 0$ for each $j \in J''$. In particular, $BPB^*$ is diagonal, and the $u$th entry is nonzero if and only if either $u = u_j$ for some $j$ such that $f^\circ_{u_j} = 0$ and $f^\circ_{v_j} \neq 0$, or $u = v_j$ for some $j$ with $f^\circ_{v_j} = 0$ and $f^\circ_{u_j} \neq 0$. This establishes \eqref{RanBPB} and completes the proof.
\end{proof}

\section{Global properties of $\lambda(\alpha)$: Proof of Theorem~\ref{thm:mainI}}
\label{sec:shift}

According to Lemma~\ref{lem:HF}, the matrix $\Re W$ determines if $\lambda(\alpha)$ has a
\emph{local} extremum at a given critical point $\alpha^\circ$.  We now turn to
the proof of Theorem~\ref{thm:mainI}, which states that the
\emph{global} properties of $\lambda(\alpha^\circ)$ are determined by the
matrix $W$ itself\,---\,without taking its real part.  

The proof hinges on the fact that we can decompose\footnote{When
  $\Omega$ is invertible} $T(\alpha) = S + R(\alpha)$, where
$R(\alpha)$ is a rank-$d$ perturbation whose signature is determined
by $\Omega$.  This yields global bounds on the eigenvalues of
$T(\alpha)$, given in Lemma~\ref{lem:combinedWeyl}..  In subsequent sections we will show that if
$W$ is sign-definite at a critical point $\alpha^\circ$, then these global bounds become saturated and we thus have a global extremum,
proving Theorem~\ref{thm:mainI}.

\subsection{A Weyl bracketing for eigenvalues of $T(\alpha)$}
\label{sec:weyl}

Let us introduce some notation that will be of use.  The inertia of a
Hermitian matrix $M$ is defined to be the triple
\begin{equation}
  \label{eq:inertia1}
  \In(M) := (i_+(M), i_-(M), i_0(M))
  =: (i_+,i_-,i_0)_M
\end{equation}
of numbers of positive, negative, and zero eigenvalues of $M$
correspondingly.\footnote{This particular ordering appears to be
traditional in the literature.}  The second notation will be sometimes
used to avoid repetitive specification of the matrix $M$.

Define the subspace $\ssQ  \subseteq
\bbC^N$ by
\begin{equation}
  \label{eq:subspaceQ}
  \ssQ = \Null(BP_{\Null(\Omega)} B^*),
\end{equation}
and let $Q$ denote the orthogonal projection onto $\ssQ$.  For an
operator $A$, we denote by $\big(A\big)_\ssQ$ the operator $QAQ^*$
{\it considered as an operator on the vector space $\ssQ$}.  We
highlight that we consider this operator acting on $\ssQ$ in order to
make the dimensions arising in each of our statements below simple to
understand.  We now define
\begin{equation}
  \label{Sdef}
  S:= \left( T(\alpha^\circ)- \lambda(\alpha^\circ) - B \Omega^+ B^* \right)_\ssQ,
\end{equation}
and
\begin{equation}
  \label{eq:S-infinity-def}
  i_\infty(S) := N - \dim(\ssQ),
\end{equation}
where $B$ and $\Omega$ given by \eqref{eqn:BandOmega_def}.

\begin{remark}
  The subspace $\ssQ$ is defined in order to make $\Omega$
  invertible on $B^*(\ssQ)$.  If one considers
  $T(\alpha^\circ)- \lambda(\alpha^\circ) - B \Omega^{-1} B^*$ as a linear relation, then $\ssQ$ is
  its regular part and $i_\infty(S)$ is the dimension of its
 singular part.  Informally, $i_\infty(S)$ is the multiplicity of
  $\infty$ as an eigenvalue of $T(\alpha^\circ)- \lambda(\alpha^\circ) - B \Omega^{-1} B^*$.
\end{remark}

\begin{remark}
\label{rem:ifB}
It follows from the formula for $BPB^*$ given in \eqref{eq:BPB} that $i_\infty(S) = \rk(BPB^*)$ is the dimension of the vector space spanned by $\{ \col_j(B) : j \in J''\}$; see also \eqref{RanBPB}.
\end{remark}

\begin{remark}
\label{rem:invertible}
In the introduction we gave an outline of the paper assuming that the eigenvector $f^\circ$ 
is nowhere zero, and hence $\Omega$ is invertible. In that case the set $J''$ defined in \eqref{eq:Jprime_def} is empty and $P_{\Null(\Omega)}=0$.
As a result, the subspace $\ssQ$ is the entire space $\bbC^N$, and so $i_\infty(S)=0$.
We invite the reader to first read the following proofs with these stronger assumptions in place.
\end{remark}

We first observe that $S$ has a $0$ eigenvalue; this fact will be used in the proofs of 
Theorems~\ref{thm:mainI} and~\ref{thm:mainII}.

\begin{lemma}
\label{lem:CP_eig0}
If $\lambda(\alpha)= \lambda_n\big(T(\alpha)\big)$ has a critical point at $\alpha^\circ$ and  $\lambda(\alpha^\circ)$ is a simple eigenvalue,
then $0$ is an eigenvalue of $S$ as defined in \eqref{Sdef}.
\end{lemma}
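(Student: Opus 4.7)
The plan is to show that $f^\circ$ itself is the eigenvector of $S$ with eigenvalue $0$. There are two things to verify: (i) $f^\circ$ lies in the subspace $\ssQ = \Null(BPB^*)$, where $P = P_{\Null(\Omega)}$, so that $f^\circ$ is a bona fide vector in the ambient space of $S$, and (ii) $(T(\alpha^\circ) - \lambda(\alpha^\circ) - B\Omega^+ B^*)f^\circ = 0$. Combining these, together with the fact that the orthogonal projection $Q$ onto $\ssQ$ acts as the identity on $f^\circ$, gives $S f^\circ = 0$, and $f^\circ \neq 0$ because it is an eigenvector.

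For step (i), I would apply the explicit formula \eqref{eq:BPB} from Lemma \ref{lem:BOBexplicit} to $f^\circ$. The $j$-th summand produces contributions proportional to $|f^\circ_{v_j}|^2 f^\circ_{u_j}$ and $|f^\circ_{u_j}|^2 f^\circ_{v_j}$, but for $j \in J''$ we have $f^\circ_{u_j} f^\circ_{v_j} = 0$ by definition of $J''$ in \eqref{eq:Jprime_def}, so both contributions vanish. Thus $BPB^* f^\circ = 0$ and $f^\circ \in \ssQ$.

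For step (ii), since $(T(\alpha^\circ) - \lambda(\alpha^\circ))f^\circ = 0$ by definition of $f^\circ$, it suffices to check that $B\Omega^+ B^* f^\circ = 0$. I would apply the explicit formula \eqref{eq:BOB} from Lemma \ref{lem:BOBexplicit} to $f^\circ$: the $j$-th summand (for $j \in J'$) yields a coefficient
\[
  \left(\tfrac{\Omega_{jj}}{|f^\circ_{u_j}|^2} f^\circ_{u_j} + h_j e^{i\alpha^\circ_j} f^\circ_{v_j}\right)\eb_{u_j}
  + \left(\cc{h_j} e^{-i\alpha^\circ_j} f^\circ_{u_j} + \tfrac{\Omega_{jj}}{|f^\circ_{v_j}|^2} f^\circ_{v_j}\right)\eb_{v_j}.
\]
Here we can use the critical-point identity \eqref{Omegaexplicit2}, which states that $\Omega_{jj} = -h_j e^{i\alpha^\circ_j} f^\circ_{v_j} \cc{f^\circ_{u_j}} = -\cc{h_j}e^{-i\alpha^\circ_j} f^\circ_{u_j} \cc{f^\circ_{v_j}}$. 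Substituting into the two coefficients above, both collapse to zero. Hence $B\Omega^+ B^* f^\circ = 0$, completing the proof.

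There is no real obstacle here: the lemma is essentially a direct algebraic consequence of the explicit formulas for $B$, $\Omega$, and $BPB^*$ derived in Section \ref{sec:decomp}, together with the critical-point characterization \eqref{cp:real}. The conceptual content is that the ``extra piece'' $B\Omega^+ B^*$ is precisely designed to annihilate $f^\circ$ at a critical point, reflecting the fact that $\lambda(\alpha^\circ)$ appears as an eigenvalue of $T(\alpha^\circ)$ to first and second order in the parameter variation.
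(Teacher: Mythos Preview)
Your proof is correct, but the paper takes a much shorter route. The paper simply invokes Theorem~\ref{lem:HF}: since $\alpha^\circ$ is a critical point, $\nabla\lambda(\alpha^\circ) = B^* f^\circ = 0$. This single identity immediately gives both of your steps at once, because both $BPB^*$ and $B\Omega^+B^*$ have $B^*$ as their rightmost factor; hence $BPB^* f^\circ = 0$ (so $f^\circ\in\ssQ$) and $B\Omega^+B^* f^\circ = 0$, and therefore $Sf^\circ = \big(T(\alpha^\circ)-\lambda(\alpha^\circ)\big)f^\circ = 0$.

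Your approach instead verifies these vanishings coordinate-by-coordinate via the explicit formulas \eqref{eq:BOB}, \eqref{eq:BPB}, and \eqref{Omegaexplicit2}. This is correct but amounts to re-deriving, in expanded form, the fact that $B^* f^\circ = 0$. The paper's argument is more conceptual and does not even require the structural assumptions (one crossing edge per generator) used to obtain those explicit formulas; it works for any analytic Hermitian family.
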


\begin{proof}
  Lemma~\ref{lem:HF} implies $B^* f^\circ = 0$, so $f^\circ
  \in\ssQ$ and hence $$S f^\circ = \big(T(\alpha^\circ)- \lambda(\alpha^\circ)\big) f^\circ = 0.$$
\end{proof}

The main result of this subsection is the following Cauchy--Weyl
bracketing inequality between $S$ and $T(\alpha)$.

\begin{lemma}
  \label{lem:combinedWeyl}
	Suppose that $\lambda(\alpha)=\lambda_n\big(T(\alpha)\big)$ has a critical point at $\alpha^\circ$ and that $\lambda(\alpha^\circ)$ is a simple eigenvalue. Let $f^\circ$ be the corresponding eigenvector and assume that $f^\circ$ is non-zero on at least one end of any
  crossing edge (see Section~\ref{sec:basic defs}). Then, for any $\alpha\in\TT^d$ the eigenvalues of
  $T(\alpha)$ and $S$ are related by
  \begin{equation}
    \label{eq:Weyl}
    \lambda_{n - i_-(\Omega) - i_\infty(S)}(S)
    \leq \lambda_n\big(T(\alpha)\big) - \lambda(\alpha^\circ)
    \leq \lambda_{n + i_+(\Omega)}(S).
  \end{equation}
\end{lemma}

\begin{proof}
  We recall that the crossing edges for the graph are denoted by $(u_j,v_j)$ 
  with $j=1, \dots, d$ (see Section~\ref{sec:basic defs}).
  Let $J' = \{j : f^\circ_{u_j} f^\circ_{v_j} \neq 0\}$ and consider the matrix
  \begin{equation}
    \label{eq:TS_decomposition}
    S'(\alpha) := T(\alpha) - \lambda(\alpha^\circ)
    - \sum_{j\in J'} R_j(\alpha_j),
  \end{equation}
  with 
  \begin{equation}
    \label{Rjdef}
    R_j(\alpha_j) :=\frac{\Omega_{jj}}{|f^\circ_{u_j}|^2} \Eb_{u_j,u_j}
    + h_j e^{i \alpha_j} \Eb_{u_j,v_j}
    + \cc{h_j} e^{-i \alpha_j} \Eb_{v_j,u_j}
    + \frac{\Omega_{jj}}{|f^\circ_{v_j}|^2} \Eb_{v_j,v_j}.
  \end{equation}
  We note that at the point $\alpha=\alpha^\circ$ the sum of
  $R_j(\alpha_j^\circ)$ matches the expression for $B\Omega^+B^*$
  obtained in Lemma~\ref{lem:BOBexplicit}.
  If $u_j \neq v_j$, the matrix $R_j(\alpha_j)$ has four nonzero
  entries, appearing in a $2\times 2$ submatrix of the form
  \begin{equation}
    \label{eq:Rblock}
    \begin{bmatrix}
      \Omega_{jj} |f^\circ_{u_j}|^{-2}
      & h_j e^{i\alpha_j} \\
      \cc{h_j} e^{-i\alpha_j}
      & \Omega_{jj} |f^\circ_{v_j}|^{-2}
    \end{bmatrix}.
  \end{equation}
  If $u_j = v_j$, then $R_j(\alpha_j)$ has a single nonzero entry,
  \begin{equation}
    \label{eq:Rblock1}
    2 \Re \left( h_j e^{i \alpha_j} - h_j e^{i \alpha^\circ_j}
    \right),  
  \end{equation}
  appearing on the diagonal.
  
  The matrices $R_j(\alpha_j)$ have several crucial properties.
  First, they are the minimal rank perturbations that remove from
  $S'(\alpha)$ any dependence on the $\alpha_j$ with $j\in J'$.
  Second, once restricted to $\ssQ = \Null(BP_{\Null(\Omega)} B^*)$, the dependence on the remaining
  $\alpha_j$ is eliminated and $S'(\alpha)$ turns into $S$ defined in
  \eqref{Sdef}. More precisely, we will now show that
  \begin{equation}
    \label{eq:A0_from_Aprime}
    S = \big(S'(\alpha)\big)_\ssQ.
  \end{equation}

  From \eqref{eq:Tj_def}, \eqref{eq:BOB} and \eqref{Rjdef} we obtain
  \begin{align*}
    \sum_{j\in J'} R_j(\alpha_j)
    &= \sum_{j \in J'} \left[T_j(\alpha_j) - T_j(\alpha^\circ_j)\right]
      + B\Omega^+ B^* \\
    &= T(\alpha) - T(\alpha^\circ)
      - \sum_{j \notin J'} \left[ T_j(\alpha_j)
      - T_j(\alpha^\circ_j) \right]
      + B \Omega^+ B^*,
  \end{align*}
  and so
  \begin{equation}
    \label{Spdef1}
    S'(\alpha) = T(\alpha^\circ) - \lambda(\alpha^\circ) - B \Omega^+ B^* + \sum_{j \in J''} \left[ T_j(\alpha_j) - T_j(\alpha^\circ_j) \right],
  \end{equation}
  where $J'' = \{j : f^\circ_{u_j} f^\circ_{v_j} = 0\}$. Each of the
  summands $T_j(\alpha_j) - T_j(\alpha^\circ_j)$ is a linear
  combination of the basis matrices $\Eb_{u_j,v_j}$ and
  $\Eb_{v_j,u_j}$. Fix an arbitrary $j \in J''$. Since $f^\circ$ is non-zero on at least one 
  end of any crossing edge, we may assume without
  loss of generality that $f^\circ_{u_j} = 0$ and
  $f^\circ_{v_j} \neq 0$. Then from \eqref{RanBPB} we have
  $\eb_{u_j} \in \Ran(BPB^*) = \Null(BPB^*)^\perp = \ssQ^\perp$, so
  $Q \eb_{u_j} = 0$, where $Q$ is the  projection operator onto $\ssQ$. This implies $Q \Eb_{u_j, v_j} = 0$ and
  $\Eb_{v_j,u_j} Q^* = 0$ and therefore
  \[
    Q \Eb_{u_j, v_j} Q^* = Q \Eb_{v_j, u_j} Q^* = 0.
  \]
It follows that all the summands in \eqref{Spdef1} with $j \in J''$ vanish
  when conjugated by the projection matrix $Q$. This completes the
  proof of \eqref{eq:A0_from_Aprime}.
  
  We now relate the eigenvalues of $T(\alpha)$ and $S'(\alpha)$ by
  computing the signature of the $R_j(\alpha_j)$ perturbations.  If
  $u_j\neq v_j$, it follows from \eqref{Omegaexplicit2} that the
  determinant of the matrix \eqref{eq:Rblock} vanishes, and so it has
  rank one, with signature given by the sign of $\Omega_{jj}$.

  On the other hand, if $u_j=v_j$, the matrix has at most one non-zero
  entry.  From Lemma \ref{lem:Bexplicit} (equation~\eqref{cp:real}
  with $f^\circ_{v_j}=f^\circ_{u_j}$) we have
  $h_j e^{i \alpha^\circ_j} \in \RR$, and so
  \begin{align*}
    \Re \left(h_j e^{i \alpha_j} - h_j e^{i \alpha^\circ_j} \right) = h_j e^{i \alpha^\circ_j} \Re \left( e^{i (\alpha_j - \alpha^\circ_j)} - 1 \right) = h_j e^{i \alpha^\circ_j} [ \cos(\alpha_j - \alpha^\circ_j) - 1].
  \end{align*}
  Since $\cos(\alpha_j - \alpha^\circ_j) < 1$ for
  $\alpha_j \neq \alpha_j^\circ$ and
  $\Omega_{jj} = - h_j e^{i \alpha^\circ_j} |f^\circ_{u_j}|^2$, we
  conclude that $R_j(\alpha_j)$ has the same sign as $\Omega_{jj}$
  provided $\alpha_j \neq \alpha_j^\circ$.

  Summing over all $j \in J'$, we conclude that
  $T(\alpha)-\lambda(\alpha^\circ)-S'(\alpha)$ has at most $i_-(\Omega)$ negative
  and at most $i_+(\Omega)$ positive eigenvalues.  It follows from the
  classical Weyl interlacing inequality that
  \begin{equation}
    \label{eq:classicalWeyl}
    \lambda_{n - i_-(\Omega)}\big(S'(\alpha)\big)
    \leq \lambda_n\big(T(\alpha)\big)-\lambda(\alpha^\circ)
    \leq \lambda_{n + i_+(\Omega)}\big(S'(\alpha)\big)
  \end{equation}
  for 
  all $\alpha \in \TT^d$.

  Now, applying the Cauchy interlacing inequality (for submatrices or,
  equivalently, for restriction to a subspace) to $S'(\alpha)$ and
  $S = \big(S'(\alpha)\big)_\ssQ$, we get
  \begin{equation*}
    \lambda_{m - i_\infty(S)}(S)
    \leq \lambda_m\big(S'(\alpha)\big)
    \leq \lambda_{m}(S)
  \end{equation*}
  for all $\alpha \in \TT^d$. Combining this with
  \eqref{eq:classicalWeyl}, we obtain the result.
\end{proof}

\begin{remark}
  The hypothesis that $f^\circ$ does not vanish identically on any
  crossing edge, which was used in the proof of
  \eqref{eq:A0_from_Aprime}, can be weakened slightly. If
  $f^\circ_{u_j} = f^\circ_{v_j} = 0$ for some $j$, the proof would
  still hold if we can show that $\eb_{u_j}$ or $\eb_{v_j}$ belong to
  the range of $BPB^*$.  The latter would hold if there exists
  another index $k$ such that $u_k$ coincides with either $u_j$ or $v_j$,
  and $f^\circ_{v_k} \neq 0$.
\end{remark}

\subsection{Index formulas for $W$}\label{sec:index thy}
In this subsection we study the relationship between the index 
of $W$ and the indices we have already encountered, namely $i_-(\Omega)$,
$i_+(\Omega)$ and $i_\infty(S)$.  This is done by observing that
$W$ has the structure of a Schur complement and then using a suitably
generalized Haynsworth formula.

The following lemma applies to any matrices $A$, $B$ and $\Omega$ satisfying the given hypotheses. In Section~\ref{sec: proof of main thm} we will apply it specifically to $A = T(\alpha^\circ) - \lambda(\alpha^\circ)$, and $B$ and $\Omega$ from \eqref{eqn:BandOmega_def}.

\begin{lemma}
\label{lem:Windex}
Suppose $W = \Omega - B^* A^+ B$, where $\Omega$ and $A$ are Hermitian matrices of size $d\times d$ and $N\times N$, respectively, and $B$ is an $N \times d$ matrix satisfying
  \begin{equation}
    \label{eq:Wregular}
    \Null(A) \subset \Ran(B)^\perp = \Null(B^*).
  \end{equation}
  Let $P=P_{\Null(\Omega)}$ be the orthogonal projection onto
  $\Null(\Omega)$ and denote $\ssQ := \Null(BPB^*)$.  Define
  \begin{equation*}
	S := \left(A - B \Omega^+ B^*\right)_{\ssQ},
  \end{equation*}
  and
  \begin{equation}
    \label{eq:i_infty_def}
    i_\infty(S) := \rk (B P B^*) = N - \dim(\ssQ).
  \end{equation}
  Then,
  \begin{align}
    i_-(W) &= i_-(\Omega) +  i_-(S) + i_\infty(S) - i_-(A)
             \label{eq:i-W}, \\
    i_0\,(W) &= i_0\,(\Omega) + i_0 \,(S) - i_\infty (S) - i_0 \,(A),
             \label{eq:i0W} \\
    i_+(W) &= i_+(\Omega) + i_+(S) + i_\infty(S) - i_+(A)
             \label{eq:i+W} \\
           &= i_+(\Omega) - i_-(S) - i_0(S) + i_-(A) + i_0(A).
             \label{eq:i+Wbis}
  \end{align}
\end{lemma}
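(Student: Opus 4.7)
The plan is to compute the inertia of the enlarged Hermitian matrix
\[
  M := \begin{pmatrix} A & B \\ B^* & \Omega \end{pmatrix}
\]
in two different ways and then equate the results. In the first computation $W$ will appear as the Schur complement of $A$; in the second, $S$ together with an ``infinity'' contribution $i_\infty(S)$ will appear from a Schur-complement-type reduction on the $\Omega$-side.

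First I will note that the regularity hypothesis $\Null(A) \subset \Null(B^*)$ is equivalent to $\Ran(B) \subset \Ran(A)$, which gives $AA^+B = B$ and $B^*A^+A = B^*$. The congruence
\[
  \begin{pmatrix} I & 0 \\ -B^*A^+ & I \end{pmatrix} M \begin{pmatrix} I & -A^+B \\ 0 & I \end{pmatrix} = \begin{pmatrix} A & 0 \\ 0 & W \end{pmatrix}
\]
therefore yields $\In(M) = \In(A) + \In(W)$, a pseudo-inverse version of the classical Haynsworth formula.

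For the second computation I will split $\bbC^d = \Null(\Omega)^\perp \oplus \Null(\Omega)$ using $P^\perp$ and $P$, and write $B = B_1 + B_2$ with $B_1 = BP^\perp$ and $B_2 = BP$. The middle block $\Omega$ decomposes as $\Omega' \oplus 0$ with $\Omega'$ invertible on $\Null(\Omega)^\perp$, and a standard Schur complement against $\Omega'$ gives
\[
  \In(M) = \In(\Omega') + \In(M'), \qquad M' := \begin{pmatrix} A - B\Omega^+B^* & B_2 \\ B_2^* & 0 \end{pmatrix},
\]
since $B_1(\Omega')^{-1}B_1^* = B\Omega^+B^*$ and $\In(\Omega') = \In(\Omega) - (0,0,i_0(\Omega))$. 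Next, setting $r := \rk(B_2) = \rk(BPB^*) = i_\infty(S)$ so that $\Null(B_2^*) = \ssQ$ and $\Ran(B_2) = \ssQ^\perp$, I will invoke the generalized Haynsworth identity from Appendix~\ref{sec:gen_Haynsworth}, applied to the block matrix with zero $(2,2)$-corner, to obtain
\[
  \In(M') = \In\big((A-B\Omega^+B^*)_\ssQ\big) + (r, r, i_0(\Omega) - r) = \In(S) + (r, r, i_0(\Omega) - r).
\]
Combining these steps yields $\In(M) = \In(\Omega) + \In(S) + (i_\infty(S), i_\infty(S), -i_\infty(S))$.

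Finally, I will equate the two expressions for $\In(M)$ componentwise, which proves \eqref{eq:i-W}, \eqref{eq:i0W} and \eqref{eq:i+W} directly. The alternative form \eqref{eq:i+Wbis} will then follow from \eqref{eq:i+W} by substituting the trace identities $i_+(S) + i_-(S) + i_0(S) = \dim\ssQ = N - i_\infty(S)$ and $i_+(A) + i_-(A) + i_0(A) = N$. The main obstacle is the evaluation of $\In(M')$: the zero diagonal block rules out a naive second Schur complement, so the appendix's generalized Haynsworth identity is essential, and one must carefully track how $\Ran(B_2)$ contributes exactly the $(r, r, -r)$ correction relative to the unperturbed block-diagonal inertia.
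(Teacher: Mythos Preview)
Your proposal is correct and follows essentially the same approach as the paper: both compute the inertia of the block matrix $M=\begin{pmatrix}A&B\\B^*&\Omega\end{pmatrix}$ two ways and equate. The paper invokes its generalized Haynsworth formula (Theorem~\ref{thm:generalized_Haynsworth}) directly on each block, whereas you inline the same argument---the explicit congruence on the $A$-side (valid precisely because $P_AB=0$) and the split $\Omega=\Omega'\oplus 0$ followed by Lemma~\ref{lem:HanFujiwara} on the $\Omega$-side---which is exactly how Theorem~\ref{thm:generalized_Haynsworth} itself is proved in the Appendix.
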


\begin{remark}
  \label{rem:Windex}
  If $\Omega$ is strictly positive, equation~\eqref{eq:i-W} simplifies
  to
  \begin{equation*}
    i_-(W) =  i_-(S) - i_-(A).
  \end{equation*}
 This can be expressed in words as ``the Morse index of $W$ is the spectral shift
  at $-0$ between $S$ and its positive perturbation
  $A = S+ B \Omega^{-1} B^*$.''  This idea is further developed in
  \cite{BerKuc_prep20}.
\end{remark}

\begin{remark}
  For $i_+(W)$ we have two forms: equation~\eqref{eq:i+W} is similar
  to the previous ones, but equation~\eqref{eq:i+Wbis} will be
  directly applicable in our proofs.  In addition,  the
  ``renormalized'' form \eqref{eq:i+Wbis}  (in the physics sense of cancelling
  infinities) is the one that retains its meaning if $S$ and $A$ are bounded
  below but unbounded above, as they would be in generalizing this
  result to elliptic operators on compact domains.
\end{remark}

\begin{proof}[Proof of Lemma~\ref{lem:Windex}]
  The definitions of the matrices $W$ and $S$ are
  reminiscent of the Schur complement, and so to investigate their
  indices, it is natural to use the Haynsworth formula~\cite{Hay_laa68}.  
  For a Hermitian matrix in block form, $M = \left[\begin{smallmatrix} A & B \\ B^* & C \end{smallmatrix} \right]$
  with $A$ \emph{invertible}, the Haynsworth formula  states
  that
  \begin{equation}
    \label{eq:Haynsworth}
    \In(M) = \In(A) + \In(C - B^* A^{-1} B),
  \end{equation}
  where the inertia triples add elementwise.
  Several versions of the formula are available for the case when $A$
  is no longer invertible (see \cite{Cot_laa74,Mad_laa88}), but the form most
  suitable for our purposes (equation~\eqref{eq:extendedHaynsworth} below) we could not find in
  the literature.  For completeness, we provide its proof in
  Appendix~\ref{sec:gen_Haynsworth}. Denote by $P_A$ the orthogonal projection onto
  the nullspace of $A$ and define
  \begin{equation}
    \label{eq:subspace_Q_iinf}
    \ssQ_A = \Null(B^*P_A B),
    \qquad
    i_\infty(M/A) = \rk(B^*P_A B) = \dim(C) - \dim(\ssQ_A),
  \end{equation}
  where $M/A$ is the generalized Schur complement of the block $A$,
\begin{equation}
  \label{eq:Schur_complement_alt}
  M/A := C - B^* A^{+} B.
\end{equation}
  Our generalized Haynsworth formula states that
  \begin{equation}
    \label{eq:extendedHaynsworth}
    \In(M) = \In(A) + \In_{\ssQ_A}(M/A)
    + (i_\infty, i_\infty, -i_\infty)_{M/A},
  \end{equation}
  where $\In_\ssQ(X)$ stands for the inertia of $X$ restricted to
  the subspace $\ssQ$.

The result now follows by a double application of this formula to
the block Hermitian matrix
\begin{equation*}
	M = \begin{bmatrix}
	A & B \\
	B^* & \Omega
	\end{bmatrix}.
\end{equation*}
Taking the complement with respect to $\Omega$, we find
\begin{align}
	\In(M) &= \In(\Omega) + \In_{\ssQ_\Omega} (M/\Omega) + (i_\infty, i_\infty, -i_\infty)_{M/\Omega}  \nonumber \\
	&= \In(\Omega) + \In(S) + (i_\infty, i_\infty, -i_\infty)_{S}, \label{InM1}
\end{align}
because $(M/\Omega)_{\ssQ_\Omega} = S$ and $i_\infty(M/\Omega) = \rk(B P_\Omega B^*) = i_\infty(S)$. On the other hand, taking the complement with respect to $A$, we find
\begin{align}
	\In(M) &= \In(A) + \In_{\ssQ_A} (M/A) + (i_\infty, i_\infty, -i_\infty)_{M/A} \nonumber \\
	&= \In(A) + \In(W) \label{InM2},
\end{align}
because \eqref{eq:Wregular} implies $P_A B = 0$, hence $\ssQ_A = \Null(B^* P_A B) = \bbC^d$ and 
\[i_\infty(M/A) = \rk(B^* P_A B) = 0. \]
Comparing \eqref{InM1} and \eqref{InM2}, we obtain
\[
  \In(W) = \In(\Omega) + \In(S) - \In(A) + (i_\infty, i_\infty, -i_\infty)_S,
\]
which is precisely \eqref{eq:i-W}--\eqref{eq:i+W}.
  To obtain~\eqref{eq:i+Wbis} from \eqref{eq:i+W} we use
  \begin{equation*}
    i_\infty(S) = N - \dim(\ssQ)
      = \big(i_+(A)+i_-(A)+i_0(A)\big)
      - \big(i_+(S)+i_-(S)+i_0(S)\big).
  \end{equation*}
  This completes the proof.
\end{proof}

\subsection{Proof of  Theorem~\ref{thm:mainI} }\label{sec: proof of main thm}

We are now ready to prove Theorem \ref{thm:mainI}, which for convenience we restate here in an equivalent form.

\begin{theorem}
  \label{E: Theorem i(W)}
  Let $T(\alpha)$ be as in \eqref{eq:Talphadecomp} and $W$ be as defined
  in \eqref{eq:W_def}.  Suppose
  $\lambda(\alpha) = \lambda_n\big(T(\alpha)\big)$ has a critical point at
  $\alpha^\circ$ such that $\lambda(\alpha^\circ)$ is simple and that the
  corresponding eigenvector $f^\circ$ is non-zero on at least one end
  of any crossing edge. 
  
  If $i_-(W)=0$, then
  \begin{equation}
    \label{eq:global_min}
    \lambda(\alpha^\circ) \leq \lambda(\alpha)
    \qquad\mbox{for all }\alpha \in \TT^d,
  \end{equation}
  i.e.  $\lambda(\alpha)$ achieves its
  global minimum at $\alpha^\circ$.
  
  If $i_+(W)=0$, then 
  \begin{equation}
    \label{eq:global_max}
    \lambda(\alpha) \leq \lambda(\alpha^\circ)
    \qquad\mbox{for all }\alpha \in \TT^d.
  \end{equation}
  i.e. $\lambda(\alpha)$ achieves its
  global maximum at $\alpha^\circ$.
\end{theorem}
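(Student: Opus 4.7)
The plan is to combine the index formulas of Lemma \ref{lem:Windex} with the Weyl-type bracketing of Lemma \ref{lem:combinedWeyl}, both applied with $A := T(\alpha^\circ) - \lambda(\alpha^\circ)$. Since $\lambda(\alpha^\circ)$ is simple and equal to the $n$-th eigenvalue of $T(\alpha^\circ)$, the inertia of $A$ is
\begin{equation*}
\big(i_+(A),\, i_-(A),\, i_0(A)\big) = (N-n,\; n-1,\; 1).
\end{equation*}
I would begin by observing that the regularity hypothesis \eqref{eq:Wregular} of Lemma \ref{lem:Windex} holds automatically in this setting: $\Null(A)$ is spanned by $f^\circ$, and $B^* f^\circ = \nabla\lambda(\alpha^\circ) = 0$ by Theorem \ref{lem:HF}, since $\alpha^\circ$ is a critical point. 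Thus the index identities of Lemma \ref{lem:Windex} are available with this choice of $A$.

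For the minimum assertion, suppose $i_-(W) = 0$. Substituting the inertia of $A$ into \eqref{eq:i-W} rearranges to $i_-(\Omega) + i_\infty(S) = (n-1) - i_-(S)$, so the index in the lower Weyl bound from Lemma \ref{lem:combinedWeyl} simplifies as $n - i_-(\Omega) - i_\infty(S) = i_-(S) + 1$. By Lemma \ref{lem:CP_eig0} we have $i_0(S) \geq 1$, so when the eigenvalues of $S$ are listed in increasing order, $\lambda_{i_-(S)+1}(S) = 0$. Lemma \ref{lem:combinedWeyl} then delivers
\begin{equation*}
0 = \lambda_{i_-(S)+1}(S) \leq \lambda_n\big(T(\alpha)\big) - \lambda(\alpha^\circ)
\end{equation*}
for every $\alpha \in \TT^d$, which is exactly \eqref{eq:global_min}.

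The case $i_+(W) = 0$ would proceed symmetrically. Equation \eqref{eq:i+W} becomes $i_+(\Omega) + i_+(S) + i_\infty(S) = N - n$. Since $S$ acts on $\ssQ$, whose dimension is $N - i_\infty(S)$, we also have $i_-(S) + i_0(S) + i_+(S) = N - i_\infty(S)$; combining these two identities yields $n + i_+(\Omega) = i_-(S) + i_0(S)$. The upper Weyl bound from Lemma \ref{lem:combinedWeyl} therefore reads
\begin{equation*}
\lambda_n\big(T(\alpha)\big) - \lambda(\alpha^\circ) \leq \lambda_{i_-(S) + i_0(S)}(S) = 0,
\end{equation*}
where the final equality uses $i_0(S) \geq 1$ together with the ordering of the eigenvalues, and this is \eqref{eq:global_max}. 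I expect the main obstacle not to lie in the index arithmetic above but rather in keeping track of the hypotheses that make the preparatory lemmas applicable: the non-vanishing of $f^\circ$ on at least one end of every crossing edge is a hypothesis of Lemma \ref{lem:combinedWeyl}, and the critical point property is precisely what ensures both \eqref{eq:Wregular} and Lemma \ref{lem:CP_eig0} are in force, so that the zero of $S$ produced by $f^\circ$ can be matched up with the simplified Weyl indices.
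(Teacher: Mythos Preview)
Your proposal is correct and follows essentially the same route as the paper's proof: apply Lemma~\ref{lem:Windex} with $A = T(\alpha^\circ)-\lambda(\alpha^\circ)$ to convert the hypothesis on $i_\pm(W)$ into the index identity $n - i_-(\Omega) - i_\infty(S) = i_-(S)+1$ (respectively $n + i_+(\Omega) = i_-(S)+i_0(S)$), then invoke Lemma~\ref{lem:CP_eig0} to identify the corresponding eigenvalue of $S$ as zero and conclude via Lemma~\ref{lem:combinedWeyl}. The only cosmetic difference is that for the maximum case you start from \eqref{eq:i+W} and manually pass to the form \eqref{eq:i+Wbis} via $\dim\ssQ = N - i_\infty(S)$, whereas the paper quotes \eqref{eq:i+Wbis} directly; your explicit check of the regularity condition \eqref{eq:Wregular} is a welcome clarification that the paper leaves implicit.
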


\begin{proof}
Let
$$A:=T(\alpha^\circ)-\lambda(\alpha^\circ).$$
 Consider first the case $i_-(W) = 0$.
  From Lemma~\ref{lem:Windex}, equation~\eqref{eq:i-W} we get
  \begin{equation*}
    0 = i_-(\Omega) +  i_-(S) + i_\infty(S)
    - i_-\left(A\right),
  \end{equation*}
  and hence, using $i_-\left(A\right) = n-1$,
  \begin{equation*}
    n - i_-(\Omega) - i_\infty(S) = i_-(S) + 1.
  \end{equation*}
  By the definition of negative index, $\lambda_{i_-(S) + 1}(S)$ is the smallest non-negative eigenvalue of $S$, which is 0 by Lemma~\ref{lem:CP_eig0}. Then
  applying Lemma~\ref{lem:combinedWeyl}, we get
  \begin{equation*}
    0 = \lambda_{i_-(S) + 1}(S) = \lambda_{n - i_-(\Omega) - i_\infty(S)}(S)
    \leq \lambda_n\big(T(\alpha)\big) - \lambda(\alpha^\circ),
  \end{equation*}
  completing the proof of inequality~\eqref{eq:global_min}.
  
  For the other case, $i_+(W)=0$, we use Lemma~\ref{rem:Windex}
  and equation~\eqref{eq:i+Wbis}, together with the observation that
  \begin{equation*}
    i_-\left(A\right)
    + i_0\left(A\right) = n,
  \end{equation*}
  because $\lambda(\alpha^\circ)$ is simple, 
  to obtain
  \begin{equation*}
    n + i_+(\Omega) = i_-(S) + i_0(S).
  \end{equation*}
 Now observe that $\lambda_{i_-(S) + i_0(S)}(S)$ is the largest non-positive eigenvalue of $S$, which is 0 by Lemma~\ref{lem:CP_eig0}.
  We then use the upper estimate in Lemma~\ref{lem:combinedWeyl} to obtain
\[
  \lambda_n\big(T(\alpha)\big) - \lambda(\alpha^\circ) \leq \lambda_{n +
    i_+(\Omega)}(S) = \lambda_{i_-(S) + i_0(S)}(S) = 0,
\]
which completes the proof of \eqref{eq:global_max}.
\end{proof}

\section{Real symmetric case: Proof of Theorem \ref{thm:mainII}}
\label{sec:proof_mainII}

From Lemma~\ref{lem:HF} and Theorem~\ref{thm:mainI}, we have the implications
\[
	\text{local minimum at } \alpha^\circ \quad \Longrightarrow  \quad \Re W \geq 0,
\]
and
\[
	W \geq 0 \quad \Longrightarrow  \quad \text{global minimum at } \alpha^\circ,
\]
and similarly for maxima.  We now restrict our attention to the case
of real symmetric $H$, with the goal of relating the spectrum of $W$
to the spectrum of its real part. 
At corner points this is always possible, since $W$ ends up being
real. At interior points, $W$ may be complex. However, for
$d \leq 3$ the real part contains enough information to control the
spectrum of the full matrix. This is no longer true when $d \geq 4$.
These observations are at the heart of Theorem
\ref{thm:mainII}, whose proof we divide into two parts. Section \ref{sec:corner} deals with corner points, while Section \ref{sec:interior} deals with interior ones.

As in the rest of the manuscipt, we fix an arbitrary $1 \leq n
\leq N$ and consider $\lambda_n\big(T(\alpha)\big)$ as a function of $\alpha$,
which we denote by $\lambda(\alpha)$.

\subsection{Corner points: Proof of Theorem \ref{thm:mainII}, case (1)}\label{sec:corner}

The following lemma, combined with Theorem~\ref{thm:mainI} and Lemma~\ref{lem:HF}, yields 
the proof of Theorem \ref{thm:mainII}(1).
\begin{lemma}
  \label{lem:corner_points}
  Assume $T(\alpha)$ is the Floquet--Bloch transform of a real
  symmetric operator $H$.  Let $\alpha^\circ \in \CornerP = \{0,
  \pi\}^d$ and assume that $\lambda(\alpha^\circ)$ is
  simple.  Then, $\alpha^\circ$ is a critical point of
  $\lambda(\alpha)$ and the corresponding matrix $W$ is real.
\end{lemma}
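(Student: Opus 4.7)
The plan is to exploit the real symmetry of $H$ together with the special form of corner points to show that $T(\alpha^\circ)$ is a real symmetric matrix, and then to read off the structure of $B$ and $\Omega$ from the explicit formulas of Section~\ref{sec:decomp}. First, I would observe that since $H$ is real symmetric, every hopping amplitude is real; in particular $h_j = H_{u_j, g_j v_j} \in \RR$ for each $j$. At a corner point $\alpha^\circ \in \{0,\pi\}^d$, every phase $e^{i\alpha_j^\circ}$ lies in $\{1,-1\}$, so the character $\chi_{\alpha^\circ}$ in \eqref{eq:character_def} takes values in $\RR$. Combined with \eqref{eq:FB_def}, this forces $T(\alpha^\circ)$ to be a real symmetric matrix. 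Since $\lambda(\alpha^\circ)$ is assumed simple, the eigenspace is one-dimensional and I may therefore choose $f^\circ \in \RR^N$ (unique up to sign).

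Next, I would verify the critical point assertion via Lemma~\ref{lem:Bexplicit}. Its condition \eqref{cp:real} requires $h_j e^{i\alpha_j^\circ} f^\circ_{v_j} \overline{f^\circ_{u_j}} \in \RR$ for every $j$; with $h_j$, $e^{i\alpha_j^\circ}$, $f^\circ_{u_j}$, $f^\circ_{v_j}$ all real this is automatic, so $\alpha^\circ$ is a critical point of $\lambda$. The formula \eqref{Omegaexplicit} from Lemma~\ref{lem:Oexplicit} then shows that $\Omega$ is a real diagonal matrix.

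To conclude that $W$ is real I would analyze $B$ carefully. Using \eqref{eq:Bexplicit} and the identity $e^{i\alpha_j^\circ} = e^{-i\alpha_j^\circ}$ (both equal $\pm 1$), the $j$-th column can be rewritten as
\begin{equation*}
\col_j(B) = i\, h_j e^{i\alpha_j^\circ}\bigl(f^\circ_{v_j}\eb_{u_j} - f^\circ_{u_j}\eb_{v_j}\bigr),
\end{equation*}
so $B = i\tilde B$ for some real $N\times d$ matrix $\tilde B$. Since $T(\alpha^\circ)-\lambda(\alpha^\circ)$ is real symmetric, so is its Moore--Penrose pseudoinverse, and the two imaginary units combine to produce
\begin{equation*}
B^* \bigl(T(\alpha^\circ)-\lambda(\alpha^\circ)\bigr)^+ B = \tilde B^T \bigl(T(\alpha^\circ)-\lambda(\alpha^\circ)\bigr)^+ \tilde B \in \RR^{d\times d}.
\end{equation*}
Combining with $\Omega \in \RR^{d\times d}$ yields $W \in \RR^{d\times d}$, as desired.

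I do not anticipate any serious obstacle: the only delicate point is keeping track of the imaginary prefactors in $B$ to confirm that $B^*(\cdots)B$ is real. Once $B$ is shown to be purely imaginary and the real symmetry of the pseudoinverse is invoked, the conclusion is immediate from the definition \eqref{eq:W_def}.
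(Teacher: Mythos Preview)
Your proposal is correct and follows essentially the same approach as the paper: both arguments show that at a corner point $T(\alpha^\circ)$ is real symmetric so $f^\circ$ may be taken real, that $B$ is purely imaginary by \eqref{eq:Bexplicit}, that $\Omega$ is real, and hence that $W=\Omega-B^*(T(\alpha^\circ)-\lambda(\alpha^\circ))^+B$ is real. The only cosmetic difference is in verifying that $\alpha^\circ$ is critical: you check condition~\eqref{cp:real} directly, whereas the paper observes that $B^*f^\circ$ is simultaneously purely imaginary (from $B$ imaginary, $f^\circ$ real) and real (as a gradient), hence zero.
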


\begin{proof}
  At a corner point $\alpha^\circ$ each $e^{i \alpha^\circ_j}$ is
  real. This means $T(\alpha^\circ)$ is a real symmetric matrix, so we
  can assume that the eigenvector $f^\circ$ is real. It then follows
  from \eqref{eq:Bexplicit} that the matrix $B$ is purely imaginary,
  and hence the vector $B^* f^\circ$ is as well. On the other hand,
  $B^* f^\circ$ is real, since it is the gradient of a real function (by
  Lemma~\ref{lem:HF}), so we conclude that $B^* f^\circ = 0$ and
  hence $\alpha^\circ$ is a critical point.

  We similarly have that $\Omega$ is real (as the Hessian of a real function, see
  \eqref{eqn:BandOmega_def}), $T(\alpha^\circ)-\lambda(\alpha^\circ)$
  is real, and $B$ is imaginary, so we conclude that
  $W = \Omega - B^*\big(T(\alpha^\circ)-\lambda(\alpha^\circ)\big)^+
  B$ is real.
\end{proof}

\begin{remark}
  The condition of $H$ being real can be relaxed.  If the matrix $T_0$
  appearing in the
  decomposition~\eqref{eq:Talphadecomp} is real, then any complex
  phase in the coefficient $h_j$ can be absorbed as a shift of the
  corresponding $\alpha_j$.  Of course, that would shift the location
  of the ``corner points.''

  The condition of real $T_0$ may turn out to hold after a ``change of
  gauge'' transformation.  Combinatorial conditions for the existence
  of a suitable gauge and a suitable choice of the fundamental domain
  were investigated in \cite{HigShi_jfa99,KorSab_jfa17,KorSab_prep18}.
\end{remark}

\begin{remark}
  On lattices whose fundamental domain is a tree, one can also test
  the local character of the extremum at $\alpha^\circ\in\CornerP$ by
  counting the sign changes of the corresponding eigenvector.  More
  precisely, assuming $f^\circ$ is the $n$-th eigenfunction of
  $T(\alpha^\circ)$ and is non-zero on any $v$, the Morse index of
  the critical point $\alpha^\circ \in \CornerP$ was shown in
  \cite{Ber_apde13,Col_apde13} (see also
  \cite[Appendix A.1]{BanBerWey_jmp15}) to be equal to
  $\phi_n - (n-1)$, where
  \begin{equation*}
    \phi_n = \# \{ (u,v) \colon T_{u,v}(\alpha^\circ) f^\circ_u
    f^\circ_v > 0\}.
  \end{equation*}
\end{remark}

\subsection{Interior points: Proof of Theorem \ref{thm:mainII}, cases  (2) and (3).}\label{sec:interior}
Next we deal with the case that $\alpha^\circ \in \TT^d$ is not a
corner point. In this case $W$ is in general complex, so
$\Hess \lambda(\alpha^\circ) = 2 \Re W$ may not contain enough
information to determine the indices $i_\pm(W)$.  However, it turns
out that if $\alpha^\circ \in \TT^d$ is not a corner point, then $0$
must be an eigenvalue of $W$. This provides enough information to
obtain the desired conclusion in dimensions $d=2$ and $3$,
as claimed in  cases (2) and (3) of Theorem \ref{thm:mainII}.

\begin{theorem}
  \label{thm:detW0}
  Assume $T(\alpha)$ is the Floquet--Bloch transform of a real
  symmetric operator $H$ and $\alpha^\circ$ is a critical point of
  $\lambda(\alpha)$, such that $ \lambda(\alpha^\circ)$ is
  simple and the corresponding eigenvector $f^\circ$ is non-zero on at
  least one end of each crossing edge (see Section~\ref{sec:basic defs}).
  Then,  $\alpha^\circ \in \TT^d \setminus \{0,\pi\}^d$ implies
  $i_0(W) \geq 1$.
\end{theorem}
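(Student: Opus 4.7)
The plan is to deduce $i_0(W) \geq 1$ from Lemma \ref{lem:Windex}, equation \eqref{eq:i0W}. Under the simplicity hypothesis $i_0\bigl(T(\alpha^\circ)-\lambda(\alpha^\circ)\bigr) = 1$, that formula reduces the claim to
\[
	i_0(\Omega) + i_0(S) - i_\infty(S) \geq 2.
\]
The inequality $i_\infty(S) = \rk(BPB^*) \leq |J''| = i_0(\Omega)$ is immediate from the diagonal form of $BPB^*$ in \eqref{eq:BPB}, so the goal is to improve it by one unit, either through $i_0(S) \geq 2$ or through the strict rank deficiency $i_\infty(S) < i_0(\Omega)$.

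The key structural input I would establish first is that \emph{$S$ is real symmetric}. Inspecting \eqref{eq:BOB} and \eqref{eq:Talphadecomp}, the complex off-diagonal blocks $h_j e^{i\alpha_j^\circ}\Eb_{u_j,v_j} + \cc{h_j}\,e^{-i\alpha_j^\circ}\Eb_{v_j,u_j}$ of $T(\alpha^\circ)$ for $j \in J'$ cancel exactly against the corresponding blocks of $B\Omega^+B^*$. For $j \in J''$, the unique vanishing end of the $j$-th crossing edge lies in $\Ran(BPB^*) \subset \ssQ^\perp$ by \eqref{RanBPB}, so the surviving complex contribution from $T_j(\alpha_j^\circ)$ is killed by conjugation with the projection $Q$ onto $\ssQ$. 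What remains is $T_0$, which is real because $H$ is real-symmetric, plus real diagonal corrections from \eqref{eq:BOB}. Combined with $Sf^\circ = 0$ from Lemma \ref{lem:CP_eig0}, the reality of $S$ yields $S\cc{f^\circ} = 0$ as well.

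If $f^\circ$ is \emph{not} a phase multiple of a real vector, then $\{f^\circ,\cc{f^\circ}\}$ is linearly independent in $\Null(S)$, giving $i_0(S) \geq 2$, and the desired bound follows at once. The remaining case $f^\circ = e^{i\phi} g$ with $g \in \RR^N$ is the main obstacle. In this case $g$ is a real eigenvector of both $T(\alpha^\circ)$ and $T(-\alpha^\circ) = \cc{T(\alpha^\circ)}$ for the same eigenvalue $\lambda(\alpha^\circ)$, so subtracting the two eigenvalue equations shows that $g$ lies in the kernel of
\[
	T(\alpha^\circ) - T(-\alpha^\circ) \;=\; 2i\sum_{j=1}^d h_j \sin(\alpha_j^\circ)\bigl(\Eb_{u_j,v_j} - \Eb_{v_j,u_j}\bigr).
\]
The critical point condition \eqref{cp:real} forces $\alpha_j^\circ \in \{0,\pi\}$ for each $j \in J'$, so the hypothesis $\alpha^\circ \notin \{0,\pi\}^d$ supplies some index $k \in J''$ with $\sin\alpha_k^\circ \neq 0$.

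For the final combinatorial step, let $v \in V$ be the unique vanishing end of the $k$-th crossing edge. Reading off the $\eb_v$-component of the displayed kernel identity, contributions from $j \in J'$ vanish because $\sin\alpha_j^\circ = 0$, and contributions from $j \in J''$ survive only when the $j$-th crossing edge has a vanishing end at $v$. If $k$ were the only such index, the single surviving term would force $\sin\alpha_k^\circ = 0$, contradicting the choice of $k$. So another index $j_\star \in J''$ must share the vanishing end $v$ with $k$, meaning that two of the rank-one summands in \eqref{eq:BPB} act on the same diagonal entry $\Eb_{v,v}$. This yields $\rk(BPB^*) \leq |J''| - 1$, i.e.\ $i_\infty(S) \leq i_0(\Omega) - 1$, and combining with $i_0(S) \geq 1$ from Lemma \ref{lem:CP_eig0} completes the verification of $i_0(\Omega) + i_0(S) - i_\infty(S) \geq 2$.
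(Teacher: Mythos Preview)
Your proof is correct and follows essentially the same strategy as the paper's: reduce via Lemma~\ref{lem:Windex} to showing $i_0(\Omega)+i_0(S)-i_\infty(S)\geq 2$, use the reality of $S$ (the paper's Lemma~\ref{lem:A0_real}) to get $i_0(S)\geq 2$ when $f^\circ$ is genuinely complex, and in the real-eigenvector case locate a second crossing edge in $J''$ sharing the vanishing vertex to force $i_\infty(S)<i_0(\Omega)$. The only cosmetic difference is that you package the last step via the kernel of $T(\alpha^\circ)-T(-\alpha^\circ)$, whereas the paper reads off the imaginary part of a single row of the eigenvalue equation $T(\alpha^\circ)f^\circ=\lambda f^\circ$; these are the same computation.
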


This theorem shows an intriguing contrast between $W$ and the Hessian of $\lambda(\alpha)$, 
the latter of which is the real
part of $W$ and is conjectured to be generically non-degenerate\,---\,see \cite{DoKucSot_jmp20} for a thorough investigation of diatomic
graphs and \cite{FilKac_am18} for a positive result for elliptic operators
on $\RR^2$.

For the proof, we will need the following observation.

\begin{lemma}
  \label{lem:A0_real}
  Under the assumptions of Theorem~\ref{thm:detW0}, the matrix $S$
  defined in \eqref{Sdef} has real entries.
\end{lemma}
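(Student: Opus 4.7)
The plan is to compute $S$ entrywise in the standard basis of $\bbC^N$ restricted to $\ssQ$ and verify that each surviving entry is real. This is well posed because, by \eqref{eq:BPB}, $BPB^*$ is a real diagonal matrix, so $\ssQ = \Null(BPB^*)$ is spanned by a subset of the standard basis vectors $\eb_u$ and the associated projection $Q$ is itself a real diagonal operator; ``real entries of $S$'' is therefore unambiguous. Since $H$ is real symmetric, every entry of $H$ is real; in particular $T_0$ has real entries, each coupling coefficient $h_j = H_{u_j,\, g_j v_j}$ is real, and $\lambda(\alpha^\circ) \in \RR$. So the only potentially complex contributions to $T(\alpha^\circ) - \lambda(\alpha^\circ) - B\Omega^+B^*$ come from the off-diagonal pieces $h_j e^{i\alpha^\circ_j}\Eb_{u_j,v_j} + \cc{h_j}e^{-i\alpha^\circ_j}\Eb_{v_j,u_j}$ of $T_j(\alpha^\circ_j)$ (see \eqref{eq:Tj_def}) and from the matching pieces in formula \eqref{eq:BOB}. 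I would split the analysis according to whether $j\in J'$ or $j\in J''$.

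For $j\in J'$ with $u_j\neq v_j$, the complex off-diagonal contributions of $T_j(\alpha^\circ_j)$ coincide with those of the $j$-th summand of \eqref{eq:BOB}, so they cancel exactly in the difference; what remains are the real diagonal entries $\Omega_{jj}/|f^\circ_{u_j}|^2$ and $\Omega_{jj}/|f^\circ_{v_j}|^2$, which are real because $\Omega$ is real by \eqref{Omegaexplicit}. In the self-loop case $u_j = v_j$, Lemma \ref{lem:Bexplicit} gives $h_j e^{i\alpha^\circ_j} \in \RR$, so $T_j(\alpha^\circ_j) = 2\Re(h_j e^{i\alpha^\circ_j})\Eb_{u_j,u_j}$ is already a real diagonal entry, while the corresponding summand of \eqref{eq:BOB} is zero by the remark following Lemma \ref{lem:BOBexplicit}. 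Consequently the matrix
\[
   T_0 - \lambda(\alpha^\circ) + \sum_{j\in J'} T_j(\alpha^\circ_j) - B\Omega^+B^*
\]
is real, and conjugating by the real projection $Q$ keeps it real.

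For $j \in J''$, the hypothesis that $f^\circ$ is nonzero on at least one end of every crossing edge forces exactly one of $f^\circ_{u_j}, f^\circ_{v_j}$ to vanish. By \eqref{RanBPB} the associated standard basis vector lies in $\Ran(BPB^*) = \ssQ^\perp$, and hence $Q\Eb_{u_j,v_j}Q^* = Q\Eb_{v_j,u_j}Q^* = 0$, exactly as already observed in the proof of Lemma \ref{lem:combinedWeyl}. This shows the (potentially complex) contribution $QT_j(\alpha^\circ_j)Q^*$ vanishes for every $j\in J''$. Combining this with the previous paragraph shows that $S$ is represented by a real matrix in the basis of $\ssQ$, as claimed.

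The argument has no real obstacle; the only delicate point is the bookkeeping that verifies the exact off-diagonal cancellation between $T_j(\alpha^\circ_j)$ and the $j$-th summand of $B\Omega^+B^*$ for $j\in J'$, together with the annihilation by $Q$ of the $J''$ contributions. Both are already packaged in the explicit formulas of Section \ref{sec:decomp}, so the proof reduces to matching entries one generator at a time.
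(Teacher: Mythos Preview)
Your proof is correct and follows essentially the same route as the paper's: split the generators into $J'$ and $J''$, observe that for $j\in J'$ the off-diagonal pieces of $T_j(\alpha^\circ_j)$ cancel exactly against those in the $B\Omega^+B^*$ formula \eqref{eq:BOB} leaving only real diagonal terms, and for $j\in J''$ invoke the annihilation $QT_j(\alpha^\circ_j)Q^*=0$ already established in the proof of Lemma~\ref{lem:combinedWeyl}. Your added remarks that $Q$ is a real diagonal projection and your separate treatment of the self-loop case $u_j=v_j$ are helpful clarifications, but the argument is the same.
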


\begin{proof}
We recall that the crossing edges for the graph are denoted by $(u_j,v_j)$ 
with $j=1, \dots, d$ (see Section~\ref{sec:basic defs}). We also continue to refer to $J'$ and $J''$ as defined in \eqref{eq:Jprime_def}.
From the decomposition \eqref{eq:Talphadecomp} we have
\[
	T(\alpha^\circ) - \lambda(\alpha^\circ) = T_0  - \lambda(\alpha^\circ) + \sum_{j \in J'} T_j(\alpha_j^\circ) + \sum_{j \in J''} T_j(\alpha_j^\circ),
\]
with $T_0$ and $\lambda(\alpha^\circ)$ real. It was shown in the proof of Lemma \ref{lem:combinedWeyl} that the summands with $j \in J''$ vanish when conjugated by the orthogonal projection $Q$ onto $\ssQ = \Null(BP_{\Null(\Omega)} B^*)$. Hence, it is enough to show that
\[
	\sum_{j \in J'} T_j(\alpha_j^\circ) - B \Omega^+ B^*
\]
is real. Using \eqref{eq:BOB}, we can write this as a sum of terms of the form
\[
	\begin{bmatrix} 0 & h_j e^{i \alpha^\circ_j} \\ \cc{h_j} e^{-i \alpha^\circ_j} & 0 \end{bmatrix} 
	 -   \begin{bmatrix}
    \Omega_{jj} |f^\circ_{u_j}|^{-2} & 
    h_j e^{i \alpha^\circ_j} \\[5pt]
    \cc{h_j} e^{-i \alpha^\circ_j} &
    \Omega_{jj} |f^\circ_{v_j}|^{-2}
  \end{bmatrix}
 = - \begin{bmatrix}
    \Omega_{jj} |f^\circ_{u_j}|^{-2} & 
    0 \\[5pt]
    0 &
    \Omega_{jj} |f^\circ_{v_j}|^{-2}
  \end{bmatrix}
\]
which have real entries by Lemma \ref{lem:Oexplicit}.
\end{proof}

\begin{proof}[Proof of Theorem~\ref{thm:detW0}]
  We first rewrite equation~\eqref{eq:i0W} of Lemma~\ref{rem:Windex} as a
  sum of non-negative terms,
  \begin{equation*}
    i_0(W) = (i_0(\Omega) - i_\infty(S)) + (i_0(S) - 1),
  \end{equation*}
  with $S$ as defined in \eqref{Sdef} and
$i_0(A) = i_0\big(T(\alpha^\circ)- \lambda(\alpha^\circ)\big) = 1$.  The first term is non-negative because $i_0(\Omega) = \rk(P) \geq \rk(B P B^*) = i_\infty(S)$, and the second term is non-negative by Lemma~\ref{lem:CP_eig0}.
  
First, suppose the real and imaginary parts of $f^\circ$ are linearly independent. From Lemma \ref{lem:CP_eig0} we have $f^\circ \in \ssQ$. Because $S$ is real, $\Re f^\circ$ and $\Im f^\circ$ are linearly independent null-vectors of $S$, so we have $i_0(S) \geq 2$ and hence $i_0(W) \geq 1$. Thus, for the remainder of the proof we can assume that the real and imaginary parts of $f^\circ$ are linearly dependent. Multiplying by a complex phase, this is the same as assuming that $f^\circ$ is real.
  
  Since $\alpha^\circ$ is not a corner point, we can assume, without loss of generality, that $\alpha_1 \notin \{0,\pi\}$.
 Using \eqref{cp:real}, the criticality of $\alpha^\circ$ implies that 
  $f^\circ_{u_1} f^\circ_{v_1}\, e^{i \alpha_1^\circ}  \in
  \RR$, and therefore $f^\circ_{u_1} f^\circ_{v_1} = 0$. Since $f^\circ$ is non-zero on at least one 
  end of any crossing edge, we may assume that $f^\circ_{u_1} = 0$ and 
  $f^\circ_{v_1}\neq 0$. From \eqref{eq:Bexplicit} we see that the first column of $B$ has a single nonzero entry, in the $u_1$ component.

From the decomposition \eqref{eq:Talphadecomp} we have $T(\alpha^\circ)_{u_1,v_1} = h_1 e^{i \alpha^\circ_1} \notin \RR$. Considering the $u_1$-th row of the eigenvalue equation $ \lambda(\alpha^\circ)f^\circ = T(\alpha^\circ) f^\circ$, we find
\[
	0 = \lambda(\alpha^\circ) f^\circ_{u_1} = T(\alpha^\circ)_{u_1,v_1} f^\circ_{v_1} + \sum_{v \neq v_1} T(\alpha^\circ)_{u_1,v} f^\circ_{v}.
\]
Since $f^\circ$ is real and $f^\circ_{v_1}\neq 0$, this implies $T(\alpha^\circ)_{u_1,v}$ is non-real for some $v \neq v_1$. This means that there exists another crossing edge, say the $j=2$ edge $(u_2,v_2)$, such that $u_1=u_2$. Then $f^\circ_{u_2} = f^\circ_{u_1} = 0$, so \eqref{eq:Bexplicit} implies that the second column of $B$ is zero except for the $u_1$ component, hence the first and second columns of $B$ are linearly dependent. By Remark \ref{rem:ifB}, this implies $\rk(P) > \rk(B P B^*)$, and hence $i_0(\Omega) - i_\infty(S) \geq 1$, which completes the proof.
\end{proof}

We now discuss what the two conditions, $\Re W \geq 0$ and $\det W =
0$, can tell us about the positivity of the matrix $W$ in dimensions $d
\leq 3$.
In dimension $d=1$ we immediately get $W=0$, hence, by Theorem \ref{E:
  Theorem i(W)}, any non-corner extremum $\lambda(\alpha^\circ)$ is both a
global minimum and a global maximum of $\lambda(\alpha)$.  Therefore,
$\lambda(\alpha)$ is a ``flat band,'' in agreement with the results of
\cite{ExnKucWin_jpa10}.
In dimensions $d=2,3$ we have the following results.

\begin{lemma}
  \label{lem:W2d}
  Let $W$ be a $2\times2$ Hermitian matrix with $\det W = 0$. If
  $\Re W \geq 0$, then $W \geq 0$.
\end{lemma}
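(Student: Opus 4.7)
The plan is to exploit the drastic reduction in degrees of freedom that occurs for a $2\times2$ Hermitian matrix once its determinant is fixed. Concretely, since $W$ is Hermitian, its eigenvalues $\lambda_1,\lambda_2$ are real and satisfy $\lambda_1\lambda_2 = \det W = 0$. Thus one eigenvalue is zero and the other equals $\tr W$, so the semidefiniteness question reduces to a single scalar inequality: $W \geq 0$ if and only if $\tr W \geq 0$.

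The second step is to observe that the diagonal entries of a Hermitian matrix are automatically real, so for $j=1,2$ we have $W_{jj} = (\Re W)_{jj}$, and in particular $\tr W = \tr(\Re W)$. The hypothesis $\Re W \geq 0$ then forces $(\Re W)_{jj} \geq 0$, hence $\tr W = \tr(\Re W) \geq 0$. Combined with the previous step, this yields $W \geq 0$.

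There is no substantive obstacle here; the whole argument is essentially the observation that positive semidefiniteness of a $2\times2$ Hermitian matrix is controlled by the pair $(\tr,\det)$, and that the trace is unaffected by taking real parts. One could alternatively parameterize $W = \bigl(\begin{smallmatrix} a & b \\ \bar b & c\end{smallmatrix}\bigr)$ with $a,c\in\RR$ and $b\in\bbC$, noting that $\Re W \geq 0$ gives $a,c\geq 0$, while $\det W=0$ already supplies the off-diagonal condition $ac=|b|^2\geq 0$, but the trace/determinant framing is cleaner and makes transparent why the result fails in dimensions $d\geq 3$ (where specifying one eigenvalue to be zero is not enough to control the others from trace alone).
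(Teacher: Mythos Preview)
Your proof is correct and matches the paper's own argument essentially line for line: the paper also observes that the (potentially) nonzero eigenvalue $w$ of $W$ satisfies $w = \tr W = \tr \Re W \geq 0$, hence $W \geq 0$. Your version simply spells out the intermediate steps more explicitly.
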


\begin{proof}
  If $w$ is the (potentially) non-zero eigenvalue of $W$, we have
  \begin{equation*}
    w = \tr W = \tr \Re W \geq 0,
  \end{equation*}
  and therefore $W \geq 0$.
\end{proof}

\begin{lemma}
  \label{lem:W3d}
  Let $W$ be a $3\times3$ Hermitian matrix with $\det W = 0$. If
  $\Re W > 0$, then $W \geq 0$.
\end{lemma}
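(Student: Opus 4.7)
The plan is to reduce the claim to showing $e_2(W) \geq 0$, where $e_2$ denotes the second elementary symmetric function of the eigenvalues. Indeed, since $W$ is $3 \times 3$ Hermitian with $\det W = 0$, its spectrum is $\{0, w_1, w_2\} \subset \RR$, and $\Re W > 0$ yields $w_1 + w_2 = \tr W = \tr \Re W > 0$; combined with $w_1 w_2 = e_2(W) \geq 0$, this forces $w_1, w_2 \geq 0$ and hence $W \geq 0$.

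Decompose $W = A + iB$ with $A = \Re W$ real symmetric (positive definite by hypothesis) and $B = \Im W$ real antisymmetric. Since $\tr(AB) = -\tr(AB)$ and so vanishes, one computes
\[
	e_2(W) = \tfrac12\bigl[(\tr W)^2 - \tr(W^2)\bigr] = e_2(A) - \tfrac12 \|B\|_F^2,
\]
and the task becomes bounding $\|B\|_F^2$ using the constraint $\det W = 0$. In dimension three I would identify $B$ with its axial vector $\mathbf{b} \in \RR^3$, normalized so that $\|\mathbf{b}\|^2 = \tfrac12 \|B\|_F^2$. Expanding $\det(A + tB)$ in powers of $t$, the coefficient of $t$ vanishes (it is a trace of the symmetric cofactor matrix of $A$ paired against the antisymmetric $B$), and the coefficient of $t^3$ is $\det B = 0$ automatically in odd dimension, leaving the identity
\[
	\det W = \det A - \mathbf{b}^T A \mathbf{b},
\]
which is routine to verify in an eigenbasis of $A$. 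Under $\det W = 0$ this reads $\mathbf{b}^T A \mathbf{b} = \det A$.

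To finish, I would compare $\|\mathbf{b}\|^2$ and $e_2(A)$ via a short linear-programming estimate. In the eigenbasis of $A$ with eigenvalues $0 < \mu_1 \leq \mu_2 \leq \mu_3$, writing $\mathbf{b} = (\beta_1, \beta_2, \beta_3)$ and $c_i = \beta_i^2 \geq 0$, the constraint is $\mu_1 c_1 + \mu_2 c_2 + \mu_3 c_3 = \mu_1\mu_2\mu_3$. Maximising $c_1 + c_2 + c_3$ under this single linear constraint is attained at an extreme point (two of the $c_i$ zero) and gives $\|\mathbf{b}\|^2 \leq \max_j \prod_{i \neq j}\mu_i = \mu_2\mu_3 \leq \mu_1\mu_2 + \mu_1\mu_3 + \mu_2\mu_3 = e_2(A)$. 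Hence $e_2(W) \geq 0$, completing the proof.

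The main obstacle is the determinantal identity $\det W = \det A - \mathbf{b}^T A \mathbf{b}$: it rests on two features genuinely specific to $d = 3$, namely that a real antisymmetric $B$ admits an axial-vector encoding and that $\det B$ vanishes for free in odd dimension. This special structure is what underlies both the lemma and the paper's remark that the argument does not extend to $d \geq 4$.
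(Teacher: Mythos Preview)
Your proof is correct, but it takes a substantially longer route than the paper's. The paper simply observes that $iB$ is a $3\times 3$ Hermitian matrix with $\tr(iB)=0$ and $\det(iB)=0$, so its spectrum is $\{-\mu,0,\mu\}$ for some $\mu\geq 0$; Weyl's inequality applied to $W=A+iB$ then gives $\lambda_2(W)\geq \lambda_1(A)+\lambda_2(iB)=\lambda_1(A)>0$, and $\det W=0$ forces $\lambda_1(W)=0$. That is the entire argument. Your approach instead passes through the elementary symmetric functions, the axial-vector identity $\det W=\det A-\mathbf{b}^T A\mathbf{b}$ (which you verify correctly), and a linear-programming estimate to bound $\|\mathbf{b}\|^2\leq \mu_2\mu_3\leq e_2(A)$. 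All of this is sound, and the determinantal identity is a nice $3\times 3$ fact in its own right, but the Weyl-interlacing argument reaches the conclusion in two lines without any computation. Both proofs ultimately exploit the same dimension-specific feature you identify in your last paragraph: a real antisymmetric $3\times 3$ matrix has a zero eigenvalue, which is what makes the middle eigenvalue of $iB$ vanish in the paper's version and what underlies your axial-vector encoding.
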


\begin{proof}
  For convenience we write $W = A + iB$, where $A$ and $B$ are real
  matrices with $B^T = - B$.  The imaginary part $iB$ is a Hermitian
  matrix with zero trace and determinant. If $B\neq 0$, then
  $i_+(iB)=i_-(iB)=i_0(iB)=1$.  Since $A > 0$, the Weyl inequalities (for
  $W$, as a perturbation of $A$ by $iB$) yield
  \begin{equation*}
    0<\lambda_1(A) \leq \lambda_2(W) \leq \lambda_3(W),  
  \end{equation*}
  forcing $\lambda_1(W)=0$ and therefore $W \geq 0$.
\end{proof}

Theorem \ref{thm:mainII} now follows as a consequence of
Theorems~ \ref{thm:mainI} and \ref{thm:detW0}, and
Lemmas~\ref{lem:HF}, \ref{lem:corner_points}, \ref{lem:W2d} and \ref{lem:W3d}.

\begin{remark}\label{rem:3dpos}
  The strict inequality $\Re W > 0$ in Lemma \ref{lem:W3d} is
  necessary when $d=3$. To see this, consider
  \[
    W = \begin{bmatrix} \epsilon & i & 0 \\
      -i & \epsilon & 0 \\
      0 & 0 & 0 \end{bmatrix}
  \]
for any $\epsilon \in (0,1)$.
The matrix $W$ has eigenvalues $-1 + \epsilon, 0, 1 + \epsilon$, whereas
  $\Re W$ has eigenvalues $0,\epsilon,\epsilon$. That is, $\det W = 0$
  and $\Re W \geq 0$, but $W$ is not non-negative.

  When $d=4$, even strict positivity of $\Re W$ is not enough to
  guarantee $W \geq 0$.  This is illustrated in
  the example of Section~\ref{sec:d4counter} below.
\end{remark}


\section{Examples}
\label{sec:examples}

We present here some illustrative graphs that highlight features of
our results, particularly regarding vanishing
components of the eigenvector and conjectured necessity of the
criterion in Theorem~\ref{thm:mainI}  (Section \ref{sec:eig}).  We also demonstrate that the
restrictions on the number of crossing edges, or, in the case of
Theorem~\ref{thm:mainII}(3), on the dimension $d$, cannot be
dropped without imposing further conditions (Section \ref{sec:assum}).

\subsection{Examples: eigenvectors with vanishing components}\label{sec:eig}

A significant effort in the course of the proofs in
Section~\ref{sec:shift} was devoted to treating eigenvectors with some
zero components.  We were motivated in this effort by some well-known
examples, which we discuss in Sections~\ref{sec:honecomb} and
\ref{sec:lieb}.  In particular, we demonstrate the use of the
generalized Haynsworth formula \eqref{eq:extendedHaynsworth}, needed
here because $\Omega$ is not invertible.  In Section~\ref{example:Kuc}
we revisit the example of \cite{HarKucSob_jpa07} and modify it to test
our conjecture that the condition in Theorem~\ref{thm:mainI} is not
only sufficient but also necessary for the global extremum.  

\subsubsection{Honeycomb lattice}
\label{sec:honecomb}

We consider the honeycomb lattice as shown in
Figure~\ref{fig:more_lattices}(a) whose fundamental domain consists of
two vertices, denoted $\tilde A$ and $\tilde B$.  The tight-binding
model on this lattice was used to study graphite \cite{Wal_pr47} and
graphene \cite{CasGraphen_rmp09,Katsnelson_graphene}.  For some discussions
of the influence of symmetry on the spectrum of this
model, see for instance
\cite{FefWei_jams12,BerCom_jst18}.  We have
\begin{equation}
  \label{eq:Talpha_honeycomb}
  T(\alpha) =
  \begin{bmatrix}
      q_{\tilde A} & -1 - e^{i \alpha_1} - e^{i \alpha_2} \\
      -1 - e^{-i \alpha_1} - e^{-i \alpha_2}  & q_{\tilde B}
    \end{bmatrix},
\end{equation}
where $q_{\tilde A}, q_{\tilde B}$ are the on-site energies for each
sub-lattice.  There is an interior global maximum of the bottom band,
and an interior global minimum of the top band, at
\[
  \alpha^\circ = \left(\frac{2 \pi}{3}, -\frac{2 \pi}{3} \right),
\]
as well as their symmetric copies at $-\alpha^\circ$.  The eigenvalues
are simple unless $q_{\tilde A} = q_{\tilde B}$, in which case the
so-called Dirac conical singularity is formed.

Assume without loss of generality that $q_{\tilde A} < q_{\tilde B}$,
and consider $\lambda = \lambda_1\big(T(\alpha)\big)$.  We have
\begin{equation*}
  \lambda(\alpha^\circ)=q_{\tilde A}, \qquad
  f^\circ =
  \begin{bmatrix}
    1\\0
  \end{bmatrix},  
\end{equation*}
and
\begin{equation*}
  T(\alpha^\circ) - \lambda(\alpha^\circ)
  = 
    \begin{bmatrix}
      0 & 0  \\
      0 &   q_{\tilde B} - q_{\tilde A}
    \end{bmatrix},
  \qquad
  \left(T(\alpha^\circ) - \lambda(\alpha^\circ)\right)^+ 
  = 
    \begin{bmatrix}
      0 & 0  \\
      0 &   (q_{\tilde B} - q_{\tilde A})^{-1}
    \end{bmatrix}.
\end{equation*}
The derivative matrices $B$ and $\Omega$ are 
\begin{equation*}
  B =
  \begin{bmatrix}
    0 & 0  \\
    i e^{-i \frac{2 \pi}{3}} &   i e^{i \frac{2 \pi}{3}}     
  \end{bmatrix}
  \quad\mbox{and}\quad 
  \Omega =
    \begin{bmatrix}
      0 & 0  \\
      0 &   0 
    \end{bmatrix}.
\end{equation*}
As a result,
\begin{equation*}
  W = - \frac{1}{q_{\tilde B} - q_{\tilde A}}
    \begin{bmatrix}
      1 & e^{-i \frac{2 \pi}{3}}   \\
      e^{i \frac{2 \pi}{3}}  &  1
    \end{bmatrix}
\end{equation*}
and $\det(W)=0$ (in agreement with Theorem~\ref{thm:detW0}) and
$W\leq0$ (in agreement with $\lambda(\cdot)$ having the global maximum at
$\alpha^\circ$).
We also observe that
\begin{equation*}
  B P B^* =
    \begin{bmatrix}
      0 & 0  \\
      0 & 2
    \end{bmatrix},
\end{equation*}
giving $\dim \ssQ = 1$,
\begin{equation*}
  S  = 
    \begin{bmatrix}
      1 &0   
    \end{bmatrix}
    \begin{bmatrix}
      0 &0   \\
      0 &  q_{\tilde B} - q_{\tilde A}
    \end{bmatrix}
    \begin{bmatrix}
      1    \\ 
      0    
    \end{bmatrix} 
    = 0.
\end{equation*}
and  $i_\infty (S) = 1$.
  
To illustrate Lemma \ref{lem:Windex}, we now have, with
$A=T(\alpha^\circ) - \lambda(\alpha^\circ)$, 
\begin{align*}
  &1 =  i_- (W) = i_-(\Omega) +  i_- (S) + i_\infty (S) - i_- (A)
    = 0 + 0 + 1 - 0, \\
  &1 =  i_0 (W) = i_0(\Omega) + i_0 (S) - i_\infty (S) - i_0 (A)
    = 2 + 1 -1 - 1, \\
  &0 =  i_+ (W) = i_+(\Omega) - i_-(S) - i_0 (S) + i_- (A) + i_0 (A)
    = 0  - 0 - 1+ 0 + 1.
\end{align*}

We also use this example to demonstrate one of the standard geometric
embeddings of the graph.  Here we follow the conventions of
\cite{CasGraphen_rmp09,FefWei_jams12,BerCom_jst18}.  A slightly
different (but unitarily equivalent) parameterization is traditionally
used in optical lattice studies, see for instance
\cite{haldane1988model,ozawa2019topological} and related references,
though we note here that the latter models often include next-to-nearest
neighbors or further connections which are not covered by our results.

\begin{figure}
  \centering
  \includegraphics{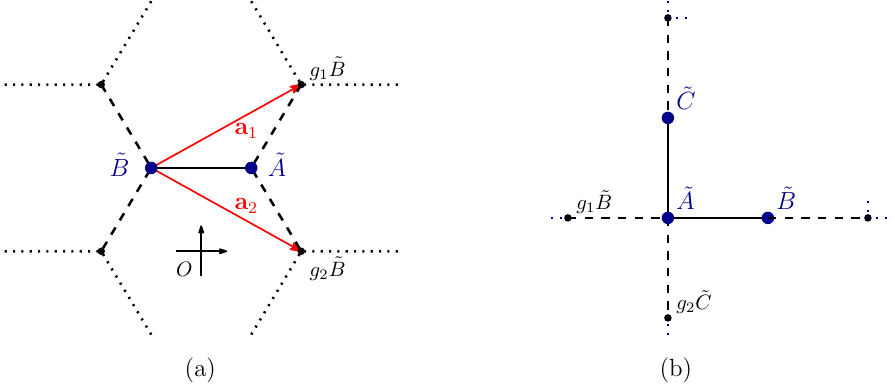}
  \caption{(a) Fundamental domain of the geometric embedding of honeycomb lattice
    resulting in the Floquet--Bloch representation \eqref{HaldaneHam}; (b)
  Fundamental domain of the Lieb lattice.}
  \label{fig:graphene_embedding}
\end{figure}

The triangle Bravais lattice is the set of points $\Lambda = \{n_1 \mathbf{a}_1
+ n_2 \mathbf{a}_2 \colon (n_1,n_2) \in \ZZ^2\}$, where
the vectors
\begin{equation}
  \label{eq:lattice_generators}
  \mathbf{a}_1 =
  \begin{pmatrix}
    \sqrt{3}/2 \\ 1/2
  \end{pmatrix},
  \quad
  \mathbf{a}_2 =
  \begin{pmatrix}
    \sqrt{3}/2 \\ -1/2
  \end{pmatrix}
\end{equation}
represent the periodicity group generators $g_1$ and $g_2$.  Vertices
$\tilde A$ are placed at locations
$\left(\frac1{2\sqrt{3}}, \frac12\right)^T + \Lambda$, while vertices
$\tilde B$ are placed at
$\left(-\frac1{2\sqrt{3}}, \frac12\right)^T + \Lambda$, see
Figure~\ref{fig:graphene_embedding}. This way the geometric graph is
invariant under rotation by $2\pi/3$, while the reflection
$x \mapsto -x$ maps vertices $\tilde A$ to $\tilde B$ and vice versa.

The reciprocal (dual) lattice, $\Lambda^*$, consists of the set of
vectors $\xi$ such that $e^{i v\cdot \xi} = 1$ for every
$v \in \Lambda$. The ``first Brillouin zone'' $\mathcal{B}$, a
particular choice of the fundamental domain in the dual space, is
defined as the Voronoi cell of the origin in the dual lattice.  In
this case it is hexagonal.

The Floquet--Bloch transformed operator parametrized by $\mathbf{k}
\in \mathcal B$ takes the form
\begin{equation}
  \label{HaldaneHam}
  T(\mathbf{k}) =
  \begin{bmatrix}
    q_{\tilde A}
    & -1 - e^{i \mathbf{k} \cdot \mathbf{a}_1}
    - e^{i \mathbf{k} \cdot \mathbf{a}_2} \\
    -1 - e^{-i \mathbf{k} \cdot \mathbf{a}_1}
    - e^{-i \mathbf{k} \cdot \mathbf{a}_2}
    & q_{\tilde B}
  \end{bmatrix}.
\end{equation}
  
While it does not admit a decomposition of the form \eqref{eq:Talphadecomp}, it is
related to $T(\alpha)$ of \eqref{eq:Talpha_honeycomb} by a linear
change of variables and so, by Remark~\ref{rem:brillouin_basis} and
Lemma~\ref{lem:invariance}, we can apply our theorems to the operator
\eqref{HaldaneHam} by directly computing the relevant derivatives in $W$
with respect to the variable $\mathbf{k}$.  We display the dispersion surfaces on the left of Figure \ref{f:dispsurfex}.

\begin{figure}
  \centering
  \includegraphics[width=.4\textwidth]{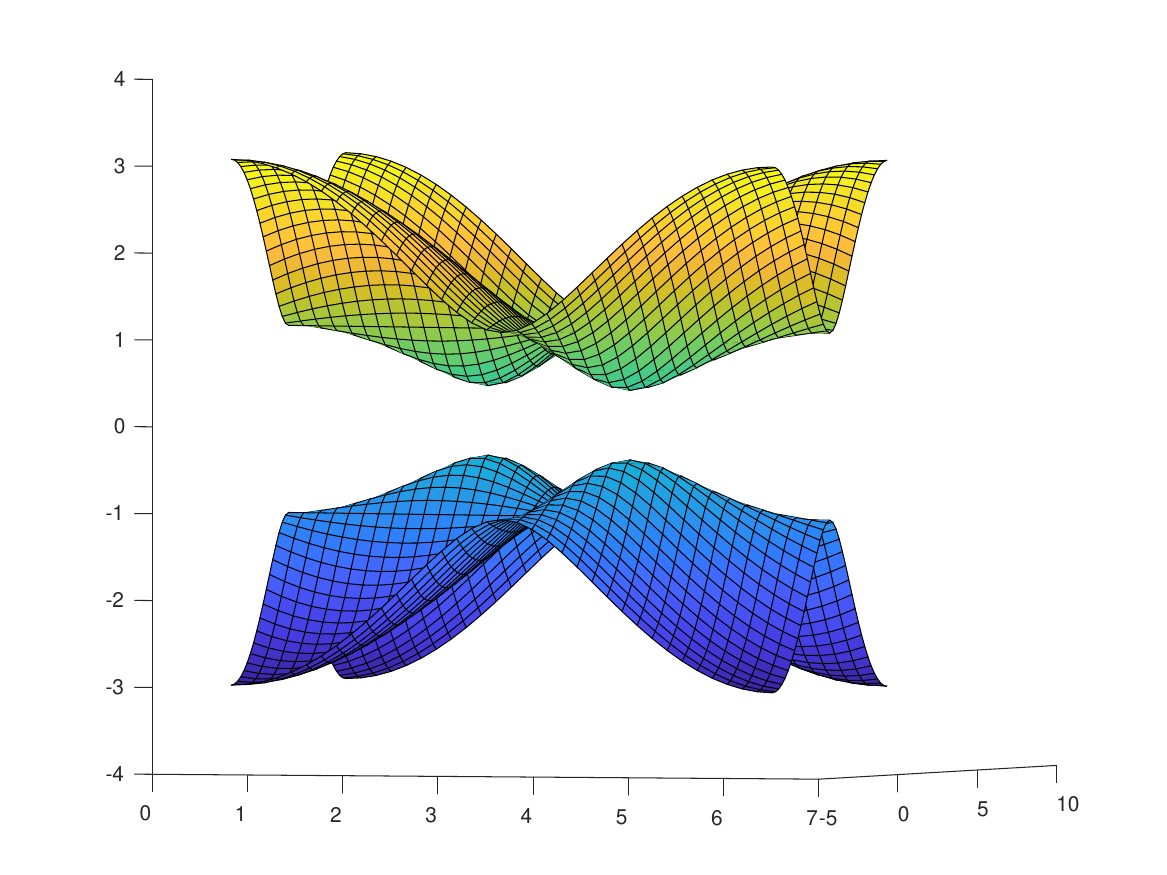}   
  \includegraphics[width=.4\textwidth]{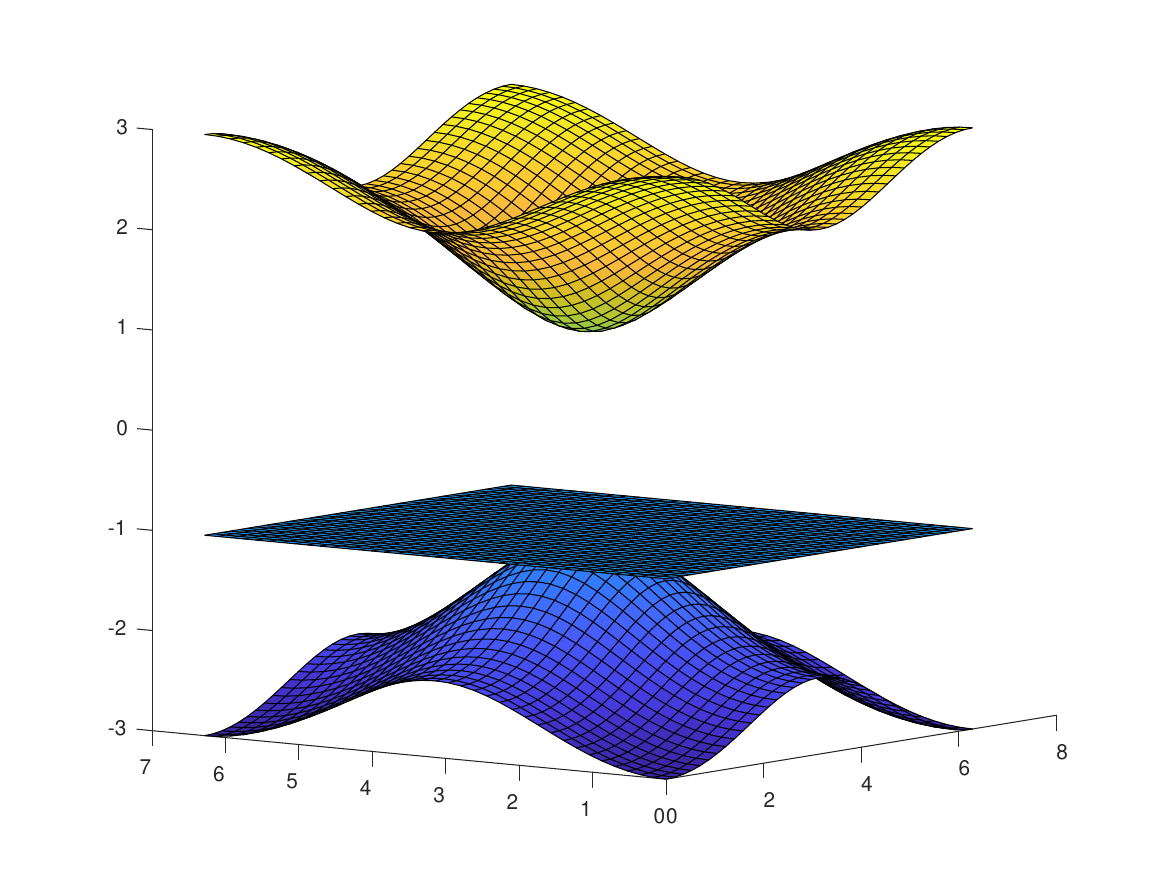} 
  \caption{The dispersion surfaces of \eqref{HaldaneHam}
    (left) and \eqref{Hamiltonian_TrivLieb} (right).}
  \label{f:dispsurfex}
\end{figure}

\subsubsection{Lieb lattice}
\label{sec:lieb}

As another key example of a model that fits into Theorem
\ref{thm:mainII}, we consider a version of the Lieb Lattice graph seen
in Figure \ref{fig:more_lattices}(b), consisting of three copies of
the square lattice as in Figure~\ref{fig:graphene_embedding}(b), with
$q_{\tilde A}, q_{\tilde B}, q_{\tilde C}$ denoting the on-site
energies for each sub-lattice
\cite{guzman2014experimental,marzuola2019bulk,Mukherjee2015,shen2010single}.
The Floquet--Bloch transformed operator is given by
\begin{equation}
   \label{Hamiltonian_TrivLieb}
  T(\alpha) =
  \begin{bmatrix}
    q_{\tilde A} & -1 - e^{i \alpha_1} & -1 - e^{i \alpha_2}  \\
    -1 - e^{-i \alpha_1} & q_{\tilde B} & 0 \\
    -1 - e^{-i \alpha_2}  & 0 & q_{\tilde C}
  \end{bmatrix}.
\end{equation}
Taking $q_{\tilde A} = 1$ and $q_{\tilde B} = q_{\tilde C} = -1$, this has eigenvalues 
\begin{align*}
	\label{onsiteLieb_evs}
	& \lambda_1(\alpha) = - \sqrt{5 + 2 \cos(\alpha_1) + 2 \cos (\alpha_2)},
	\qquad \lambda_2(\alpha) = -1, \\
	&\hspace{1cm} \lambda_3(\alpha) =  \sqrt{5 + 2 \cos(\alpha_1) + 2 \cos (\alpha_2)}.
\end{align*}
We display the dispersion surfaces on the right in Figure \ref{f:dispsurfex}.
In particular, $\lambda_3(\alpha)$ has a minimum at
$\alpha^\circ = (\pi, \pi)$, namely $\lambda_3(\alpha^\circ) = 1$,
with an eigenvector $f^\circ = (1,0,0)^T$ that vanishes on exactly one
end of each crossing edge. We have
\begin{equation*}
  \Omega = 
    \begin{bmatrix}
      0 & 0 \\
      0 & 0 
    \end{bmatrix}
  = \Omega^+,
  \qquad
  B = D \big(T(\alpha) f^\circ \big)\Big|_{\alpha = (\pi,\pi)} =
    \begin{bmatrix}
      0 & 0 \\
      -i & 0 \\
      0 & -i
    \end{bmatrix}
\end{equation*}
and therefore
\begin{equation*}
  T(\pi,\pi) - \lambda^\circ - B\Omega^{+} B^*
  =
    \begin{bmatrix}
      0 & 0  & 0 \\
      0 & -2 & 0 \\
      0 & 0 & -2
    \end{bmatrix},
  \quad
  \text{and} 
  \quad
  B P B^* =
    \begin{bmatrix}
      0 & 0 & 0 \\
      0 & 1 & 0 \\
      0 & 0 & 1 
    \end{bmatrix}, 
\end{equation*}
giving that $\dim \ssQ = 1$.
Then,
\begin{equation*}
  S  =
    \begin{bmatrix}
      1 & 0 & 0   
    \end{bmatrix}
    \begin{bmatrix}
      0 & 0  & 0 \\
      0 & -2 & 0 \\
      0 & 0 & -2
    \end{bmatrix}
    \begin{bmatrix}
      1  \\  0  \\  0
    \end{bmatrix} 
  = 0,
\end{equation*}
and $i_\infty (S) = 2$.
We also compute
\begin{equation*}
  W = \Omega - B^* A^{+} B =
    \begin{bmatrix}
      \frac12 & 0 \\
      0 & \frac12
    \end{bmatrix}.
\end{equation*}

We note that because $\alpha = (\pi,\pi)$ is a corner point,
Theorem~\ref{thm:detW0} ($\det W=0$) does not apply, but
Lemma~\ref{lem:corner_points} ($W$ is real) does.


\subsubsection{A magnetic modification of the example of \cite{HarKucSob_jpa07}}
\label{example:Kuc}

To give an illustration of Theorem~\ref{thm:mainI} with complex $H$, we modify the example in Figure~\ref{fig:periodic_example} by adding a magnetic field.
Consider the Floquet--Bloch operator
\begin{equation}
  \label{e:KucMag}
  T_\beta ( \alpha) =
    \begin{bmatrix}
      0 & 0 & e^{i \alpha_1} & 1 & 1 + i \beta \\
      0 & 0 & 1 & e^{i \alpha_2} & 1 \\
      e^{-i \alpha_1} & 1 & 0 & 1 & 0 \\
      1 & e^{-i \alpha_2} & 1 & 0 & 1 \\
      1 - i \beta & 1 & 0 & 1 & 0 
    \end{bmatrix},
\end{equation}
which, with $\beta=0$, reproduces the example considered in
\cite{HarKucSob_jpa07}.  It was observed in \cite{HarKucSob_jpa07}
that the second dispersion band has two maxima at interior
points, related by the symmetry $\alpha \mapsto -\alpha$ in the Brillouin
zone, see Figure~\ref{fig:KucMag}(left).  Similarly, there are two
internal minima.  Non-zero $\beta$ adds a slight magnetic field
term on the $1 \to 5$ edge of the form and breaks the symmetry in
the dispersion relation.  One maximum becomes larger (and hence the global
maximum) and the other one smaller (merely a local
maximum), as can be seen in Figure \ref{fig:KucMag}(right). 

\begin{figure}
  \centering
  \includegraphics[width=.4\textwidth]{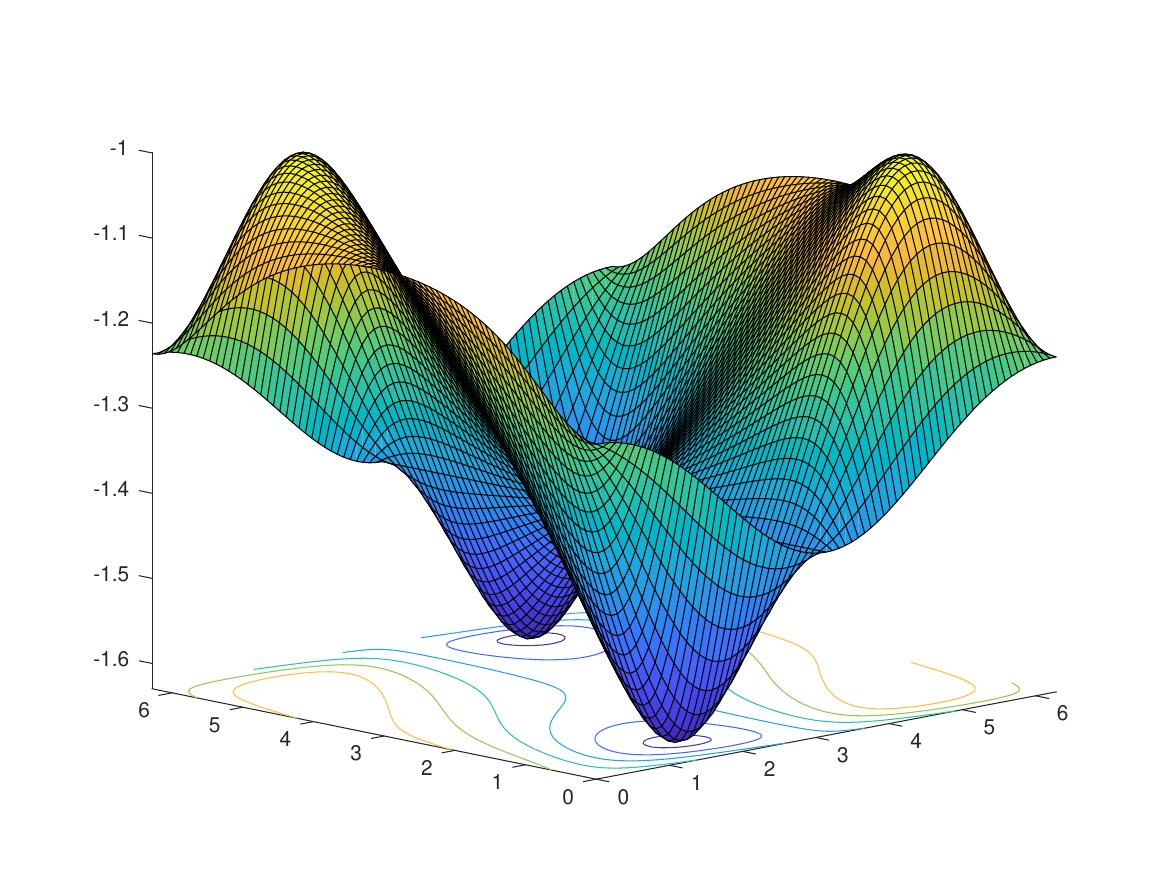}   
  \includegraphics[width=.4\textwidth]{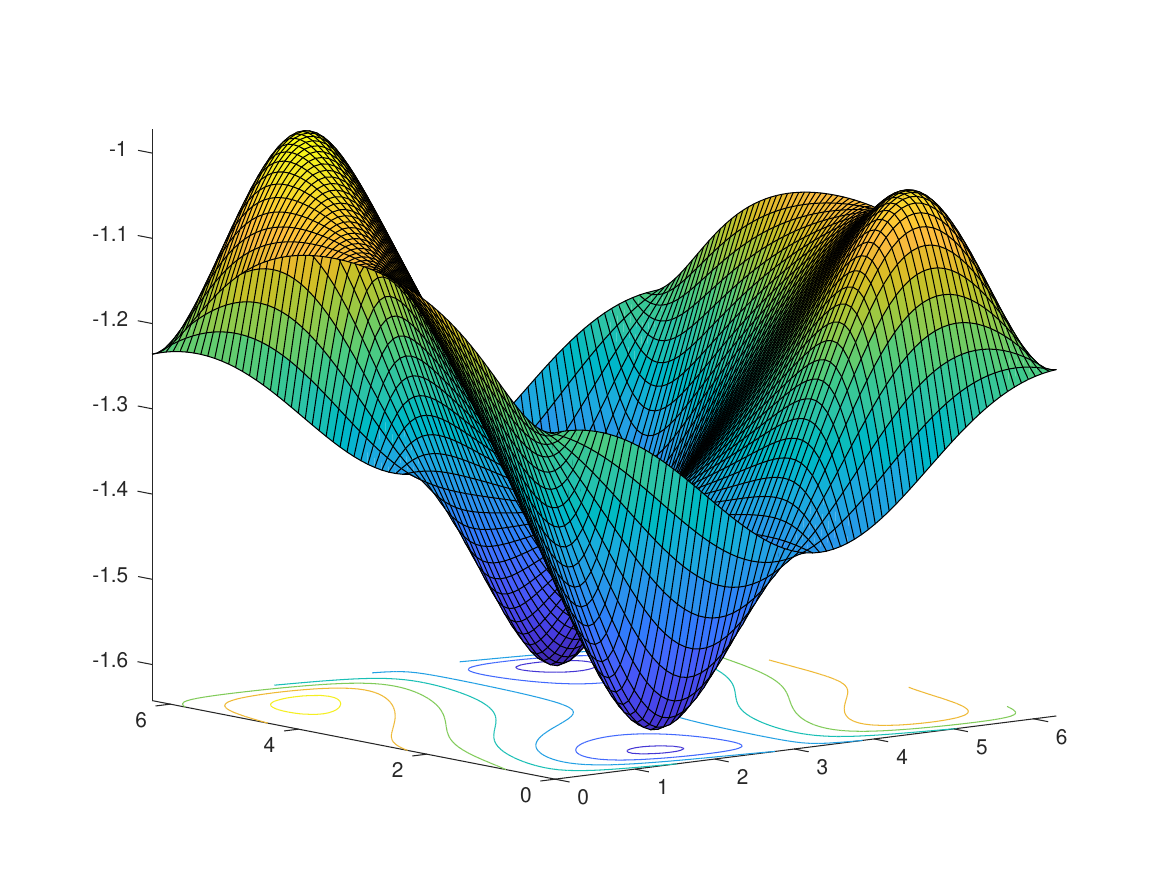} 
  \caption{The second dispersion surface of \eqref{e:KucMag}
    with $\beta=0$ (left) and with $\beta = 0.1$ (right).}
  \label{fig:KucMag}
\end{figure} 

Taking $\beta = 0.1$, the locations of the two maxima of
$\lambda_2\big(T_\beta (\alpha)\big)$ were numerically
computed using {\it Matlab} (both using an optimization solver {\tt
  fminunc} and a root finder {\tt fsolve}) to be at $(\alpha_1^g,
\alpha_2^g) \approx (1.0632,5.2200)$ and $(\alpha_1^\ell, \alpha_2^\ell) \approx
(5.2534,1.0298)$.  Computing their corresponding eigenvectors
$f^\circ$ and using equation~\eqref{Dlambda} the gradient was verified
to be zero with error of less than $4 \times 10^{-16}$ for both
critical points.  
For this model, following Lemmas \ref{lem:Bexplicit} and \ref{lem:Oexplicit} we have 
\begin{equation}
  \Omega = 
  \begin{bmatrix}
- \Re ( e^{i \alpha_1^\circ} f^\circ_3 \overline{f^\circ_1} )& 0 \\
0 & - \Re ( e^{i \alpha_2^\circ} f^\circ_4 \overline{f^\circ_2} )
\end{bmatrix} , \ \  \qquad
 B =  \begin{bmatrix}
i e^{i \alpha_1^\circ} f^\circ_3 & 0 \\
0 & i e^{i \alpha_2^\circ} f^\circ_4 \\
-i e^{-i \alpha_1^\circ} f^\circ_1  & 0 \\
0 & -i e^{-i \alpha_2^\circ} f^\circ_2 \\
0 & 0
\end{bmatrix} 
\end{equation}
and as a result we can easily compute the eigenvalues of
$W = \Omega - B^* (T (\alpha^\circ) - \lambda_2 I)^+ B$.  At the
global maximum, $W$ is found to have two negative eigenvalues,
$\{-0.3433, -0.0095\}$, whereas at the local maximum $W$ is
sign-indefinite with eigenvalues $\{-0.3240, 0.0097\}$, the signs
of which are determined up to errors much larger than those in our
calculations.  Analogous results hold for the global and local minima.

This example motivates the following.
\begin{conjecture}
  \label{conj}
  Under the assumptions of Theorem~\ref{thm:mainI}, a critical point
  $\alpha^\circ$ is a global minimum if and only if $W \geq 0$, and a
  global maximum if and only if $W \leq 0$.
\end{conjecture}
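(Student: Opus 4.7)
The plan is to prove the converse direction of the conjecture (the forward direction is Theorem~\ref{thm:mainI}) by contrapositive: assume $i_-(W) \geq 1$ and construct $\alpha \in \TT^d$ with $\lambda_n\big(T(\alpha)\big) < \lambda(\alpha^\circ)$; the case $i_+(W) \geq 1$ would be handled symmetrically. By Lemma~\ref{lem:Windex} and the fact that $i_-\big(T(\alpha^\circ) - \lambda(\alpha^\circ)\big) = n-1$, the assumption $i_-(W) \geq 1$ is equivalent to $i_-(\Omega) + i_-(S) + i_\infty(S) \geq n$, so that $\lambda_{n - i_-(\Omega) - i_\infty(S)}(S)$ is a strictly negative eigenvalue of $S$. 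The Weyl bracketing in Lemma~\ref{lem:combinedWeyl} already places this negative value below $\lambda_n\big(T(\alpha)\big) - \lambda(\alpha^\circ)$, so the task is to exhibit some $\alpha$ for which this lower bound is essentially attained.

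First I would revisit the proof of Lemma~\ref{lem:combinedWeyl} and isolate the conditions on $\alpha$ under which both bracketings used there become sharp simultaneously: the Cauchy interlacing relating $S$ and $S'(\alpha)$, and the classical Weyl inequality for the rank-$d$ perturbation $\sum_{j \in J'} R_j(\alpha_j)$. Next I would tune the components $\alpha_j$ with $j \in J''$ (which only affect $S'(\alpha)$ and not $S$) so that $S'(\alpha)$ develops enough large negative eigenvalues supported on $\ssQ^\perp$ to saturate the Cauchy step. Finally I would tune the $\alpha_j$ with $j \in J'$ and $\Omega_{jj} < 0$ so that the rank-one perturbations $R_j(\alpha_j)$ contribute their full negative Weyl shift of $i_-(\Omega)$ positions. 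If all three steps can be arranged simultaneously, their composition will produce $\lambda_n\big(T(\alpha)\big) - \lambda(\alpha^\circ) \leq \lambda_{n - i_-(\Omega) - i_\infty(S)}(S) < 0$, contradicting globality of the minimum.

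The main obstacle is that the bracketing in Lemma~\ref{lem:combinedWeyl} is intrinsically one-sided: the matrices $T_j(\alpha_j) - T_j(\alpha_j^\circ)$ and $R_j(\alpha_j)$ depend on $\alpha_j$ only through the bounded quantity $e^{i\alpha_j}$, so there is no free ``large mass'' mechanism to automatically saturate the Cauchy step, and simultaneous saturation may genuinely fail. A more promising alternative is the Lateral Variation Principle advertised in the introduction and developed in \cite{BerKuc_future}, which is meant to identify $i_-(W)$ with the Morse index at $\alpha^\circ$ of an auxiliary eigenvalue function obtained by a lateral deformation of $T(\alpha)$. The hard part will be transferring such a lateral descent direction into a genuine direction of decrease on the original dispersion relation $\lambda_n\big(T(\alpha)\big)$ over the torus; the magnetic example in Section~\ref{example:Kuc} suggests that the conjecture does hold, but that the mechanism is delicate, and a complete proof may well require a global topological argument on top of the local index calculus.
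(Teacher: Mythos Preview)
The statement you are attempting to prove is explicitly labeled a \emph{Conjecture} in the paper; there is no proof to compare against. The paper establishes only the sufficiency direction (Theorem~\ref{thm:mainI}) and offers the necessity direction as an open problem, supported solely by the numerical evidence of the magnetic example in Section~\ref{example:Kuc}: at the non-global local maximum the computed $W$ turns out to be sign-indefinite, which is consistent with the conjecture but proves nothing.

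Your proposal is not so much a proof as a candid exploration of why the obvious attack stalls, and on that level it is accurate. You correctly observe that the Weyl--Cauchy chain of Lemma~\ref{lem:combinedWeyl} is one-sided: knowing that $\lambda_{n-i_-(\Omega)-i_\infty(S)}(S)<0$ only bounds $\lambda_n\big(T(\alpha)\big)-\lambda(\alpha^\circ)$ from below by a negative number, which says nothing, and promoting that bound to an equality would require simultaneous saturation of both interlacing steps. Since the parameters enter only through the bounded quantities $e^{i\alpha_j}$, there is no large-coupling limit to force this saturation, and your proposed tuning of the $\alpha_j$ with $j\in J''$ cannot make $S'(\alpha)$ arbitrarily negative on $\ssQ^\perp$. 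The Lateral Variation idea you invoke from \cite{BerKuc_future} would at best produce a descent direction for an auxiliary perturbation family, not for the original band $\lambda_n\big(T(\alpha)\big)$ on the torus; transferring it back is exactly the missing step, and the paper does not supply it either. In short, you have identified the genuine obstructions, but so has the paper, and the conjecture remains open.
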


\subsection{(Counter)examples: multiple crossing edges and large dimensions}\label{sec:assum}

In this section we provide examples showing that the assumptions in
our theorems are necessary.  First, in Section
\ref{sec:MEG}, we show that the assumption in
Theorems \ref{thm:mainI} and \ref{thm:mainII} that the graph has one
crossing edge per generator is needed.  Next, in Section \ref{sec:d4counter} we show that,
even when $H$ is real-symmetric, the conclusion of Theorem
\ref{thm:mainII} fails for $d=4$.

The example in Section \ref{sec:MEG} demonstrates one of the simplest possible ways of adding multiple edges per generator in the context of a $2 \times 2$ model $T(\alpha)$, but the form of the operator was motivated by the Haldane model \cite{haldane1988model}, which includes next nearest neighbor complex hopping terms in the form of $T(k)$ given in \eqref{HaldaneHam}. We will observe by directly computing the eigenvalues that the dispersion relation can have a local minimum that is not a global minimum.

\subsubsection{Multiple edges per generator}\label{sec:MEG}

To see that the condition of one edge per generator is required, we first consider a model similar to that of the Honeycomb lattice, but with another edge for one of the generators, specifically given by
\begin{equation*}
T(\alpha) = \begin{bmatrix}
-1 + t \cos(\alpha_2) & -1 - e^{i \alpha_1} - e^{i \alpha_2} \\
 -1 - e^{-i \alpha_1} - e^{-i \alpha_2}  & 1 - t \cos(\alpha_2)
 \end{bmatrix},
 \end{equation*}
 where we have introduced multiple edges per generator and for simplicity chosen $q_{\tilde A} = -1$ and $q_{\tilde B} = 1$.   For $t$ sufficiently large, we observe that the branch for $\lambda_1(\alpha)$ has a local minimum that is not a global minimum, as shown in the dispersion surface plotted in Figure \ref{fig:ce1}, where we have taken $ t = 4$ and thus the lowest dispersion surface is described by the function
 \[
\lambda_1 (\alpha) =  -\sqrt{ 2 ( 6 + \cos (\alpha_1) + \cos(\alpha_1 - \alpha_2) - 3 \cos( \alpha_2) + 4 \cos (2 \alpha_2))}.
 \]
The local minimum here occurs at $\alpha = (0,0)$, which is a corner point, and hence we have that $W = \Re W$ is non-negative.  Therefore, this gives a counterexample to both Theorems ~\ref{thm:mainI} and~\ref{thm:mainII} in the case of multiple edges per generator. 

 \begin{figure}
  \centering
  \includegraphics[width=.4\textwidth]{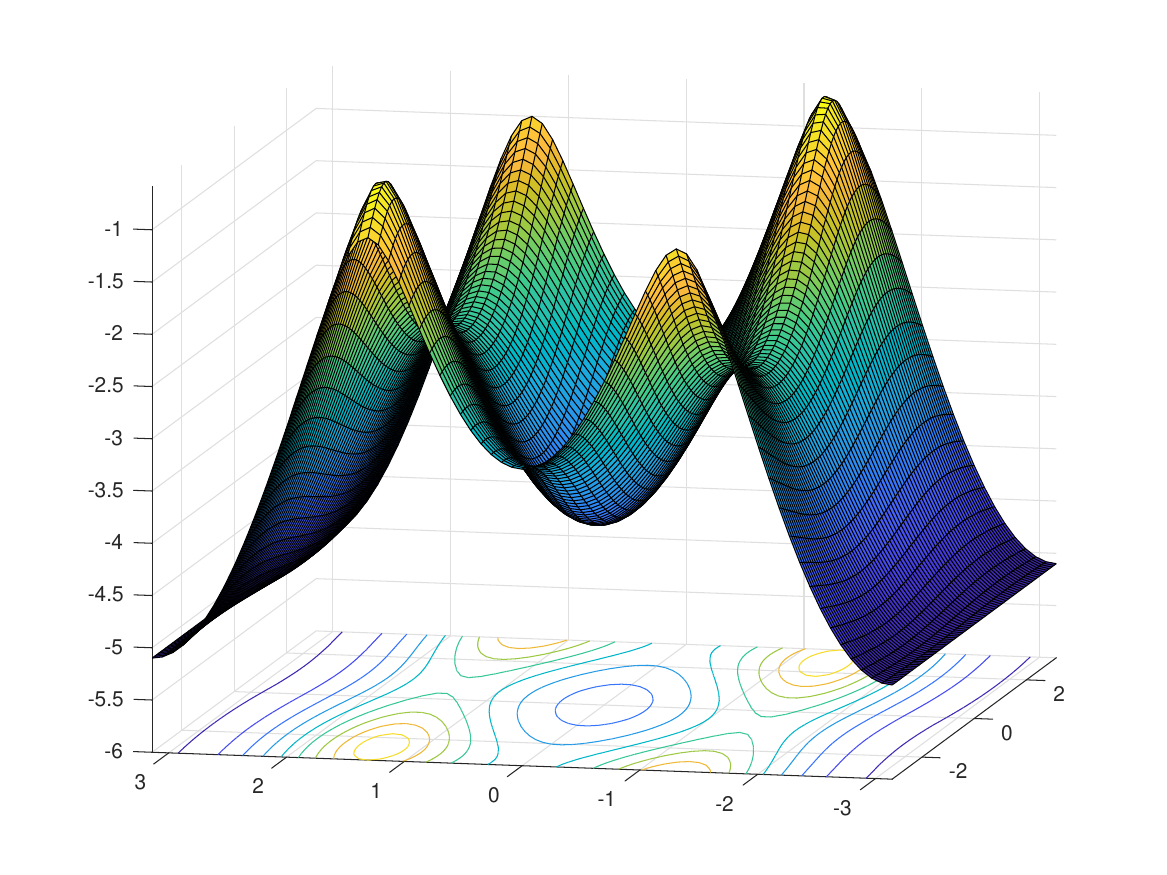}
  \caption{An example of a dispersion band on a $\mathbb{Z}^2$-periodic graph $\Gamma$ with a local (but not global) minimum resulting from multiple edges per generator.}
  \label{fig:ce1}
\end{figure}

This example was motivated by the Haldane model, which is
$\ZZ^2$-periodic. However,  even $\ZZ^1$-periodic graph operators are not
immune from this problem, see \cite{ExnKucWin_jpa10} and \cite[Example
1]{Shi_cmp14}.

\subsubsection{Dimension $d \geq 4$}
\label{sec:d4counter}

We construct here a $\ZZ^4$-periodic graph that displays a
local extremum that is not a global extremum.  The example was found
by searching through positive rank-one perturbations of a random
symmetric matrix having $1$ as a degenerate eigenvalue; this ensured that $1$
is a local (but not necesarily global) maximum.  As a trigger for
terminating the search we used the conjecture in Section~\ref{conj}: the matrix $W$
was computed and the search was stopped when it was sign-indefinite.
The resulting example reveals the presence of a global maximum
elsewhere, thus also serving as a numerical confirmation of the
conjecture's veracity.  We report it with all entries rounded
off for compactness.

\vspace{-5pt}
\begin{tiny}
  \begin{align*}
    \label{d4counter}
    & T( \alpha)  = \\
    & \left[
      \begin{array}{rrrrr}
        2.556782 & .104696 & -.000742 & -.049562 & -.072260 \\
        .104696  & 3.69455 & -.436154 & -.126495 & -.571811 \\
        -.000742 &-.4361543 & 15.033535 & .139015 & -.363838 \\
        -.049562 & -.126495  &  .139015 &  2.146425 & .298246  \\
        -.072260   &  -.571811   & -.363838 &  .298246 &  9.097398
      \end{array} \right] + \left[ \begin{array}{ccccc}
  0 & e^{i \alpha_1} &e^{i \alpha_2}&-e^{i \alpha_3} & e^{i \alpha_4} \\
  e^{-i \alpha_1}  & 0 &0 &0  &0  \\
  e^{-i \alpha_2}  &0  & 0 & 0 & 0  \\
    -e^{-i \alpha_3} & 0 & 0 &  0 & 0   \\
  e^{-i \alpha_4}   & 0  & 0 & 0 & 0 
    \end{array} \right].
  \end{align*}
\end{tiny}

Using the objective function of the form
$\lambda_1\big(T(\alpha)\big)$ and running a Newton BFGS optimization
with randomly seeded values of $\alpha$, we find two distinct local
maxima at $ \alpha^\circ \approx (-1.488, -2.153, 1.553, -3.324)$ and
$\lambda_1(\alpha^\circ) \approx 0.989459$ (close but not equal to $1$ due
to rounding off the entries of the example matrix).  However, the
observed global maximum is $\lambda_1 (\pi,0,\pi,0) \approx 1.2467$.  Hence,
we observe that the corner point is a local maximum that is in fact a
global maximum (as follows from Theorem~\ref{thm:mainII}(1)), but the interior
point is a local maximum that is not a global maximum. The minimum of
the second band appears to be $2.63496$, hence there are no
degeneracies arising between the first two spectral bands.

\appendix
\section{A generalized Haynsworth formula}
\label{sec:gen_Haynsworth}

The inertia of a Hermitian matrix $M$ is defined
to be the triple
\begin{equation}
  \label{eq:inertia}
  \In(M) = (i_+(M), i_-(M), i_0(M))
\end{equation}
of numbers of positive, negative and zero eigenvalues of $M$, respectively. For a Hermitian matrix in block form,
\begin{equation}
  \label{eq:block_form1}
  M =
  \begin{bmatrix}
    A & B \\
    B^* & C
  \end{bmatrix},
\end{equation}
the Haynsworth formula \cite{Hay_laa68} shows that, if $A$ is invertible, then
\begin{equation}
  \label{eq:Haynsworth1}
  \In(M) = \In(A) + \In(M/A),
\end{equation}
where 
\begin{equation}
  \label{eq:Schur_complement}
  M/A := C - B^* A^{-1} B
\end{equation}
is the Schur complement of the block $A$.  We are concerned with the
case when the matrix $A$ is singular.  In this case, inequalities
extending \eqref{eq:Haynsworth1} have been obtained by Carlson et
al. \cite{CarHayMar_siamjam74} and a complete formula was derived by
Maddocks \cite[Thm 6.1]{Mad_laa88}.  Here we propose a different
variant of Maddocks' formula.  Our variant makes the correction terms
more transparent and easier to calculate; they are motivated by a
spectral flow picture.  They are also curiously similar to the answers
obtained in a related question by Morse \cite{Mor_rrmpa71} and Cottle
\cite{Cot_laa74}.

\begin{theorem} 
\label{thm:generalized_Haynsworth}
Suppose $M$ is a Hermitian matrix in the block form \eqref{eq:block_form1}, and let $P$ denote the orthogonal projection onto $\Null(A)$. Then
\begin{equation}
	\label{eq:extendedHaynsworth1}
	\In(M) = \In(A) + \In_{\ssQ}(M/A) + (i_\infty, i_\infty, -i_\infty),
\end{equation}
where  the subspace $\ssQ$ is defined by
\begin{equation}
\label{eq:subspace_Q}
	\ssQ = \Null(B^*P B), 
\end{equation}
$\In_\ssQ(X)$ stands for the inertia of $X$ restricted to the subspace $\ssQ$, and $i_\infty$ is given by
\begin{equation}
\label{eq:i_infinity}
	i_\infty = i_\infty(M/A) = \rk(B^*PB) 
	= \dim(C) - \dim(\ssQ).
\end{equation}
\end{theorem}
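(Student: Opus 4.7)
The plan is to prove the formula by a sequence of congruence transformations combined with Sylvester's law of inertia, generalizing the classical proof of \eqref{eq:Haynsworth1}. The first step will be to apply the congruence $L = \left(\begin{smallmatrix} I & 0 \\ -B^* A^+ & I \end{smallmatrix}\right)$ to $M$. Using $A A^+ = A^+ A = I - P$, a direct block computation yields
\[
  L M L^* = \begin{pmatrix} A & PB \\ B^* P & M/A \end{pmatrix}.
\]
This is the key use of the pseudoinverse: conjugation by $L$ eliminates the $(1,2)$-coupling \emph{except} for the portion supported on $\Null(A)$, which is precisely the obstruction encoded by $i_\infty$. Note that if $\Null(A) \subset \Null(B^*)$ then $PB = 0$ and one recovers the classical Haynsworth formula immediately.

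Next, I would decompose the first block space as $\Ran(A) \oplus \Null(A)$. In this basis $A$ becomes block diagonal, with an invertible piece $A_0$ of inertia $(i_+(A), i_-(A), 0)$ on $\Ran(A)$ and a zero piece on $\Null(A)$. Crucially, $PB$ lives entirely in the $\Null(A)$ component, so the $\Ran(A)$ block decouples from everything else and contributes $(i_+(A), i_-(A), 0)$ to $\In(M)$. The remaining task is therefore to compute the inertia of
\[
  N := \begin{pmatrix} 0 & B_1 \\ B_1^* & M/A \end{pmatrix},
\]
where $B_1 \colon \bbC^{\dim C} \to \Null(A)$ is the restriction of $PB$, with $\Null(B_1) = \Null(B^*PB) = \ssQ$ and $\rk(B_1) = i_\infty$.

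For the second congruence, I would decompose the domain of $B_1$ as $\ssQ \oplus \ssQ^\perp$ and the codomain $\Null(A)$ as $\Ran(B_1) \oplus \Ran(B_1)^\perp$. In these bases $B_1$ has a single nonzero block, the invertible map $D \colon \ssQ^\perp \to \Ran(B_1)$, and $M/A$ takes the form $\left(\begin{smallmatrix} (M/A)_\ssQ & E \\ E^* & F \end{smallmatrix}\right)$. Using $D$ as a pivot, a congruence that adds a suitable multiple of the $\Ran(B_1)$-rows to the $\ssQ$-rows (and symmetrically for columns) kills the $E$ block without disturbing either $D$ or $(M/A)_\ssQ$. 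The resulting matrix splits as a direct sum of three pieces: the zero block on $\Ran(B_1)^\perp$ of inertia $(0,0, i_0(A) - i_\infty)$; the block $(M/A)_\ssQ$ of inertia $\In_\ssQ(M/A)$; and a block $\left(\begin{smallmatrix} 0 & D \\ D^* & F' \end{smallmatrix}\right)$ whose inertia is $(i_\infty, i_\infty, 0)$ because this matrix is congruent (via a further congruence absorbing $D$ and then shifting by $F'/2$) to the hyperbolic form $\left(\begin{smallmatrix} 0 & I \\ I & 0\end{smallmatrix}\right)$.

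Summing these three contributions with the earlier $(i_+(A), i_-(A), 0)$ from $A_0$ and regrouping gives exactly
\[
  \In(M) = \In(A) + \In_\ssQ(M/A) + (i_\infty, i_\infty, -i_\infty),
\]
as claimed. The main obstacle, and where care is required, is in the final congruence: one must verify that the pivot $D$ really does allow the $E$ block to be cleared without producing unwanted new off-diagonal entries, and that the residual $F$ can be absorbed to reveal the hyperbolic signature. Once these reductions are carried out cleanly, the inertia tally is routine bookkeeping.
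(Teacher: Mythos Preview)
Your proof is correct, and the overall strategy matches the paper's: both reduce $M$ by congruence to a direct sum of an invertible part of $A$ and a block of the form $\left(\begin{smallmatrix} 0 & B_1 \\ B_1^* & M/A \end{smallmatrix}\right)$, then compute the inertia of the latter. The execution differs in two places. First, you reach the reduced form in a single step via the pseudoinverse congruence $L = \left(\begin{smallmatrix} I & 0 \\ -B^*A^+ & I \end{smallmatrix}\right)$, whereas the paper first unitarily diagonalizes $A$ and then applies the classical Haynsworth formula to the invertible block $\Theta$. Second, for the inertia of the reduced block the paper invokes the Han--Fujiwara / Jongen--M\"obert--R\"uckmann--Tammer formula (Lemma~\ref{lem:HanFujiwara}) as a black box, offering a separate spectral-flow proof, while you prove that lemma inline via the congruences that clear $E$ and then absorb $F$ into the hyperbolic pair $\left(\begin{smallmatrix} 0 & D \\ D^* & 0 \end{smallmatrix}\right)$. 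Your route is slightly more self-contained---only Sylvester's law is used throughout---while the paper's modular version isolates Lemma~\ref{lem:HanFujiwara} as a reusable statement.
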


\begin{remark}
  \label{rem:any_complement}
  If the matrix $A$ is singular, equation~\eqref{eq:Schur_complement}
  is not appropriate for defining the Schur complement.  It is usual
  to consider the generalized Schur complement 
  \[
  M/A := C - B^* A^+ B,
  \]
  where $A^+$ is the Moore--Penrose pseudoinverse, which is what we have done in the main arguments above.  However, because of
  the restriction to $\ssQ$, any reasonable generalization will work in
  equation~\eqref{eq:extendedHaynsworth1}.  For example, 
  \begin{equation}
    \label{eq:epsilon_Schur}
    M/A_\epsilon := C - B^* (A+\epsilon P)^{-1} B,
  \end{equation}
is well defined for any $\epsilon \neq 0$. Taking the limit $\epsilon \to \infty$, we recover the definition
  with $A^+$. 
In fact, it can be shown that
  \begin{equation*}
    M/A_\epsilon = M/A
    - \tfrac1\epsilon B^*P B,
  \end{equation*}
with the last summand being identically zero on the subspace $\ssQ$.  It follows that the restriction $(M / A_\epsilon)_\ssQ = (M/A)_\ssQ$ is \emph{independent of $\epsilon$}, so the index $\In_\ssQ(M/A_\epsilon)$ is as well.
\end{remark}

\begin{remark}
  The index $i_\infty(M/A)$ has a beautiful geometrical meaning: it is
  the number of eigenvalues of $M/A_\epsilon$ which escape to infinity
  as $\epsilon \to 0$.  Correspondingly, $\In_Q(M/A)$ counts the
  eigenvalues of $M/A_\epsilon$ converging to positive, negative and zero \emph{finite} limits as $\epsilon\to 0$.
\end{remark}

\begin{remark}
  As a self-adjoint linear relation, the Schur complement $M/A$ is
  well defined even if $A$ is singular (see, for example, \cite{CdV_aif99}).
  Then the index $i_\infty(M/A)$ has the meaning of the dimension of
  the multivalued part whereas $\In_Q(M/A)$ is the inertia of
  the operator part of the linear relation (see, for
  example, \cite[Sec 14.1]{Schmudgen_UnboundedSAO} for relevant
  definitions).
\end{remark}

The proof of Theorem~\ref{thm:generalized_Haynsworth} follows simply
from the following formula, which was proved in the generality we
require in \cite{Jon+_laa87} (inspired by a reduced version appearing
in \cite{HanFuj_laa85}).  The original proofs are of ``linear
algebra'' type.  For geometric intuition we will provide a ``spectral
flow'' argument in Section~\ref{sec:HanFujiwara_proof}.

\begin{lemma}[Jongen-M\"obert-R\"uckmann-Tammer, Han-Fujiwara]
  \label{lem:HanFujiwara}
  The inertia of the Hermitian matrix
  \begin{equation}
    \label{eq:A0}
    M =
    \begin{bmatrix}
      0_m & B \\
      B^* & C
    \end{bmatrix},
  \end{equation}
  where $0_m$ is the $m\times m$ zero matrix, is given by the formula
  \begin{equation}
    \label{eq:HanFujiwara}
    \In(M) = \In_{\Null(B)}(C) + (\rk(B), \rk(B), m-\rk(B)). 
  \end{equation}
\end{lemma}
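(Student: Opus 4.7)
The plan is a two-stage algebraic reduction by congruence, using Sylvester's law of inertia to read off the inertia contribution of each simplified block. Throughout, write $n$ for the number of columns of $B$ (so $C$ is $n \times n$), and set $r = \rk(B)$.

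First, decompose $\bbC^n = \Null(B) \oplus \Ran(B^*)$. With respect to this splitting together with the natural $\bbC^m$, the matrix $M$ takes the refined $3 \times 3$ block form
\[
  M = \begin{pmatrix} 0_m & 0 & B_1 \\ 0 & C_{00} & C_{01} \\ B_1^* & C_{10} & C_{11} \end{pmatrix},
\]
where $B_1$ is $m \times r$ of full column rank $r$, the $(2,2)$ block $C_{00}$ is exactly $C|_{\Null(B)}$, and $C_{01} = C_{10}^*$. Because $B_1^*$ has rank $r$ equal to its number of rows, it is surjective onto $\bbC^r$, so the equation $B_1^* Z = -C_{10}$ admits a solution $Z \in \bbC^{m \times (n-r)}$. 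The congruence with the invertible matrix
\[
  P = \begin{pmatrix} I_m & Z & 0 \\ 0 & I_{n-r} & 0 \\ 0 & 0 & I_r \end{pmatrix}
\]
annihilates $C_{10}$ and (by Hermitian symmetry, automatically) $C_{01}$, yielding the orthogonal direct sum $P^* M P = C_{00} \;\oplus\; \left(\begin{smallmatrix} 0_m & B_1 \\ B_1^* & C_{11}\end{smallmatrix}\right)$. The first summand contributes $\In_{\Null(B)}(C)$ to the inertia.

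For the remaining summand, I would further split $\bbC^m = \Ran(B_1) \oplus \Ran(B_1)^\perp$. Since $\Ran(B_1)^\perp = \Null(B_1^*)$, this complement decouples as a pure zero block of size $m - r$, contributing $(0, 0, m - r)$. What remains is $\left(\begin{smallmatrix} 0_r & \tilde B_1 \\ \tilde B_1^* & C_{11} \end{smallmatrix}\right)$ with $\tilde B_1$ square of size $r$ and invertible. A final congruence by $Q = \left(\begin{smallmatrix} I & -\tfrac12 \tilde B_1^{-*} C_{11} \\ 0 & I \end{smallmatrix}\right)$ eliminates $C_{11}$ (using its Hermitian symmetry in checking $Y^* \tilde B_1 + \tilde B_1^* Y = -C_{11}$), reducing the block to $\left(\begin{smallmatrix} 0_r & \tilde B_1 \\ \tilde B_1^* & 0 \end{smallmatrix}\right)$, whose eigenvalues are $\pm \sigma_i$ as $\sigma_i$ ranges over the strictly positive singular values of $\tilde B_1$; its inertia is $(r, r, 0)$.

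Adding the three contributions gives
\[
  \In(M) = \In_{\Null(B)}(C) + (0, 0, m-r) + (r, r, 0),
\]
which is exactly the claim. The only substantive technical point is the surjectivity of $B_1^*$ that makes the first elimination possible even when $B$ itself admits no invertible sub-block on the small side; the rest is delicate but routine bookkeeping of dimensions and Hermitian symmetry. The "spectral flow" picture hinted at in the paper corresponds to perturbing $0_m$ to $\epsilon I_m$ and applying classical Haynsworth as $\epsilon \to 0$; one then sees $m - r$ eigenvalues crossing zero, giving the $(r, r, m-r)$ correction. The algebraic route above produces the same count without recourse to a continuity argument.
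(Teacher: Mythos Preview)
Your proof is correct and complete: the two congruence reductions are valid, the surjectivity of $B_1^*$ is exactly the right observation to make the first elimination work, and the final antidiagonal block has the claimed inertia.

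Your route, however, is genuinely different from the paper's.  The paper deliberately gives a perturbation-theoretic (``spectral flow'') argument: it replaces $0_m$ by $\epsilon I_m$, applies the classical Haynsworth formula to the now-invertible top-left block, and then tracks the eigenvalues of the Schur complement $C - \tfrac{1}{\epsilon}B^*B$ as $\epsilon \to 0$.  Exactly $\rk(B)$ of them escape to $-\infty$ (for $\epsilon>0$) and the rest converge to the eigenvalues of $C$ restricted to $\Null(B^*B)=\Null(B)$; the positive index is handled symmetrically with $\epsilon<0$.  Your argument is instead the direct ``linear algebra'' reduction by congruence that the paper explicitly attributes to the original references (Jongen et al., Han--Fujiwara) and chose not to reproduce.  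The trade-off: your approach is self-contained and avoids any continuity/limit reasoning, while the paper's approach supplies the geometric interpretation of $i_\infty$ as the number of eigenvalues escaping to infinity, which is what motivates the notation and the shape of the generalized Haynsworth formula in Theorem~\ref{thm:generalized_Haynsworth}.  One small inaccuracy in your closing aside: in the paper's picture it is $\rk(B)$ eigenvalues of the Schur complement that diverge, not ``$m-r$ eigenvalues crossing zero''; but this remark is not part of your actual argument.
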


\begin{proof}[Proof of Theorem~\ref{thm:generalized_Haynsworth}]
  Take $A$ and $M$ as given by \eqref{eq:block_form1}.   Let $V = (V_1\ V_0)$ be the unitary matrix of eigenvectors of $A$,
  with $V_0$ being the $m=\dim\Null(A)$ eigenvectors of eigenvalue
  $0$.  We have
  \begin{equation*}
    V^* A V =
    \begin{bmatrix}
      \Theta & 0 \\
      0 & 0_m
    \end{bmatrix},
  \end{equation*}
  where $\Theta$ is the non-zero eigenvalue matrix of $A$ and only the
  most important block size is indicated.  We recall that, with the above
  notation, the Moore--Penrose pseudoinverse is given by
  $A^+ = V_1 \Theta^{-1} V_1^*$.

  Conjugating $M$  by the block-diagonal matrix
  $\diag(V, I)$, we obtain the unitary equivalence
  \begin{equation*}
    M \simeq
    \begin{bmatrix}
      \Theta & 0 & V_1^*B \\
      0 & 0 & V_0^*B \\
      B^*V_1 & B^* V_0 & C
    \end{bmatrix}.
  \end{equation*}
  Applying the Haynsworth formula to the invertible matrix $\Theta$, we get
  \begin{equation*}
    \In(M) = \In(\Theta) + \In
    \begin{bmatrix}
      0 & V_0^*B \\
      B^* V_0 & C - B^*V_1 \Theta^{-1} V_1^* B
    \end{bmatrix}.
  \end{equation*}
We now apply Lemma~\ref{lem:HanFujiwara} to get
	\begin{equation*}
	\In(M) = \In(\Theta)
	+ \In_Q\left(C - B^*V_1 \Theta^{-1} V_1^* B\right)
	+ (i_\infty, i_\infty,m  -i_\infty),
\end{equation*}
since $\Null(V_0^* B) = \Null(B^* PB) = \ssQ$
and $\rk(V_0^*B) = \rk(B^*PB)  = i _\infty$.  We finish the 
proof by observing that $\In(\Theta) + (0,0,m) = \In(A)$ and $C - B^*V_1 \Theta^{-1} V_1^* B$ is equal to the generalized Schur complement $C-B^*A^+B = M/A$.
\end{proof}

\subsection{An alternative proof of Lemma~\ref{lem:HanFujiwara}}
\label{sec:HanFujiwara_proof}

To give a perturbation theory intuition behind
Lemma~\ref{lem:HanFujiwara}, define
\begin{equation}
  \label{eq:Aepsilon}
  M_\epsilon =
  \begin{bmatrix}
    \epsilon I_m & B \\
    B^* & C
  \end{bmatrix}.
\end{equation}
For $\epsilon > 0$, $M_\epsilon$ is a non-negative perturbation of
$M$.  When $\epsilon$ is small enough, none of the negative
eigenvalues of $M$ will cross $0$, therefore $i_-(M_\epsilon) =
i_-(M)$.  Applying the Haynsworth formula to the invertible matrix $\epsilon I$,
we get
\begin{equation*}
  i_-(M) = i_-(M_\epsilon)
  = i_-(\epsilon I) + i_-\left(C - \tfrac1\epsilon B^*B\right)
  = i_-\left(C - \tfrac1\epsilon B^*B\right).
\end{equation*}
Due to the presence of $\frac1\epsilon$, some eigenvalue of
$M/\epsilon := C - \frac1\epsilon B^*B$ becomes unbounded.  More precisely, the
Hilbert space on which $C$ is acting can be decomposed as
\begin{equation}
  \label{eq:decomp}
  H_C = \Ran(B^*B) \oplus \Null(B^*B).  
\end{equation}
There are $\rk (B^*B)$
eigenvalues of $M/\epsilon$ going to $-\infty$ as
$\epsilon\to0$.  The rest of the eigenvalues of
$M/\epsilon$ converge to eigenvalues of $C$ restricted to
$\Null(B^*B)$.  Informally, the operator $M/\epsilon$ is
reduced by the above Hilbert space decomposition in the limit
$\epsilon\to0$.  This argument can be made precise by applying
the Haynsworth formula to $M/\epsilon$ written out in the block
form in the decomposition \eqref{eq:decomp}.

The negative eigenvalues of $i_-(M_\epsilon)$ thus come from
$\rk(B^*B)=\rk(B)$ eigenvalues going to $-\infty$, and the negative
eigenvalues of $C$ on $\Null(B^*B) = \Null(B)$.  This establishes the
negative index in equation~\eqref{eq:HanFujiwara}.  Positive
eigenvalues are calculated similarly by considering small negative
$\epsilon$, and the zero index can be obtained from the total dimension.

\bibliographystyle{myalpha}
\bibliography{bk_bibl}

\end{document}